\title{Parallel Independence in Attributed Graph Rewriting}
\author{Thierry~Boy~de~la~Tour
\institute{Univ. Grenoble Alpes, CNRS, Grenoble INP, LIG, 38000 Grenoble, France}
\email{thierry.boy-de-la-tour@imag.fr}
}
\theoremstyle{plain}% default
\newtheorem{theorem}{Theorem}[section]
\newtheorem{lemma}[theorem]{Lemma}
\newtheorem{prop}[theorem]{Proposition}
\newtheorem{corollary}[theorem]{Corollary}
\theoremstyle{definition}
\newtheorem{definition}[theorem]{Definition}
\newtheorem{example}[theorem]{Example}
\newtheorem{remark}[theorem]{Remark}
\newcommand{\set}[1]{\{#1\}}
\newcommand{\imply}{\Rightarrow}
\newcommand{\tuple}[1]{(#1)}
\newcommand{\Zed}{\mathds{Z}}
\newcommand{\ensvide}{\varnothing}
\newcommand{\Part}[1]{\mathscr{P}(#1)}
\newcommand{\defeq}{\stackrel{\mathrm{\scriptscriptstyle def}}{=}}
\newcommand{\restrfc}[3]{#1\vert_{#2}^{#3}}
\newcommand{\restrf}[2]{\restrfc{#1}{#2}{}}
\newcommand{\invf}[1]{#1^{-1}}
\newcommand{\meetf}{\curlywedge}
\newcommand{\joinf}{\curlyvee}
\newcommand{\bigjoinf}{\curlyvee}
\newcommand{\Alg}{\mathcal{A}}
\newcommand{\labelf}[1]{\mathring{#1}}
\newcommand{\algebrf}[1]{\mathscr{A}_{#1}}
\newcommand{\attrf}[1]{\mathring{#1}}
\newcommand{\carrier}[1]{\lfloor{#1}\rfloor}
\newcommand{\nodes}[1]{\dot{#1}}
\newcommand{\arrows}[1]{\vec{#1}}
\newcommand{\leftf}[1]{\acute{#1}}
\newcommand{\rightf}[1]{\grave{#1}}
\newcommand{\subalg}{\mathrel{\lhd}}
\newcommand{\delgraph}[4]{#1\setminus[#2,#3,#4]}
\newcommand{\gcap}{\sqcap}
\newcommand{\gcup}{\sqcup}
\newcommand{\Gcup}{\bigsqcup}
\newcommand{\isog}{\mathrel{\simeq}}
\newcommand{\sepattr}{\mathrel{|}}
\newcommand{\am}{\alpha}
\newcommand{\mm}{\mu}
\newcommand{\nm}{\nu}
\newcommand{\liftm}[1]{\mbox{$#1\!\!\uparrow$}}
\newcommand{\Termsig}[1]{\mathscr{T}(\varSigma,#1)}
\newcommand{\Vars}{\mathscr{V}}
\newcommand{\Var}[1]{\mathrm{Var}(#1)}
\newcommand{\Matches}[2]{\mathscr{M}(#1,#2)}
\newcommand{\R}{\mathcal{R}}
\newcommand{\rulem}[1]{\mathrm{r}_{#1}}
\newcommand{\Lg}[1]{\mathrm{L}_{#1}}
\newcommand{\Kg}[1]{\mathrm{K}_{#1}}
\newcommand{\Rg}[1]{\mathrm{R}_{#1}}
\newcommand{\nodesLg}[1]{\nodes{\mathrm{L}}_{#1}}
\newcommand{\nodesKg}[1]{\nodes{\mathrm{K}}_{#1}}
\newcommand{\nodesRg}[1]{\nodes{\mathrm{R}}_{#1}}
\newcommand{\arrowsLg}[1]{\arrows{\mathrm{L}}_{#1}}
\newcommand{\arrowsKg}[1]{\arrows{\mathrm{K}}_{#1}}
\newcommand{\arrowsRg}[1]{\arrows{\mathrm{R}}_{#1}}
\newcommand{\labelfLg}[1]{\labelf{\mathrm{L}}_{#1}}
\newcommand{\labelfKg}[1]{\labelf{\mathrm{K}}_{#1}}
\newcommand{\labelfRg}[1]{\labelf{\mathrm{R}}_{#1}}
\newcommand{\attrfLg}[1]{\labelfLg{#1}}
\newcommand{\attrfKg}[1]{\labelfKg{#1}}
\newcommand{\attrfRg}[1]{\labelfRg{#1}}
\newcommand{\RImg}[2]{{#1}^{\uparrow}_{#2}}
\newcommand{\Ipgr}[2]{#1{\parallel}_{#2}}
\newcommand{\fullPRr}[1]{\mathrel{\Longmapsto_{#1}}}
\newcommand{\IPTrew}[1]{\mathrel{\Longrightarrow_{#1}}}
\newcommand{\Vd}[1]{\mathrm{V}_{\!#1}}
\newcommand{\Ad}[1]{\mathrm{A}_{#1}}
\newcommand{\ld}[1]{\ell_{#1}}
\newcommand{\llift}[1]{\ell^{\,\uparrow}_{#1}}
\newcommand{\seqr}[1]{\mathrel{\longrightarrow_{#1}}}
\newcommand{\seqrew}[1]{\mathrel{\longrightarrow_{#1}^{\star}}}
\begin{document}
\maketitle

\begin{abstract}
  In order to define graph transformations by the simultaneous
  application of concurrent rules, we have adopted in previous work a
  structure of attributed graphs stable by unions. We analyze the
  consequences on parallel independence, a property that characterizes
  the possibility to resort to sequential rewriting. This property
  turns out to depend not only on the left-hand side of rules, as in
  algebraic approaches to graph rewriting, but also on their
  right-hand side. It is then shown that, of three possible
  definitions of parallel rewriting, only one is convenient in the
  light of parallel independence.
\end{abstract}

\section{Introduction}\label{sec-intro}

The notion of parallel independence from \cite{Rosen75,EhrigK76} has
been studied mostly in the algebraic approaches to graph rewriting,
see \cite{HndBkCorradiniMREHL97}.  It basically consists in a
condition on concurrent transformations of an object that not only
guarantees but characterizes the possibility to apply the
transformations sequentially in any order such that all such sequences
of transformations yield the same result.

When two transformations are involved, with rules $r_1$ and $r_2$,
this takes the form of the diamond property and is known as the
\emph{Local Church-Rosser Problem} \cite{HndBkCorradiniMREHL97}; it
consists in finding a condition (called parallel independence) on
direct transformations $H_1 \xleftarrow{r_1} G \xrightarrow{r_2} H_2$
that is equivalent to the existence of direct transformations
$H_1\xrightarrow{r_2} H \xleftarrow{r_1} H_2$ with the same redexes,
hence to the existence of two equivalent sequences of transformations
$G\xrightarrow{r_1} H_1 \xrightarrow{r_2} H$ and
$G\xrightarrow{r_2} H_2 \xrightarrow{r_1} H$. It is obvious that non
overlapping redexes always entail parallel independence, the
difficulty of the problem is that the reverse does not hold and that,
depending on the rules, some amount of overlap may be allowed.  The
notion of parallel independence is also instrumental in defining
Critical Pairs (as pairs of transformations that are not parallel
independent) that are central in proving confluence of sets of
production rules \cite{LambersEO08,Costaetal16}.

This notion should therefore also be considered in algorithmic
approaches to graph rewriting. Indeed, the informal description of
parallel independence given above makes perfect sense out of the
algebraic approach; it is purely operational. Consider for instance
Python's multiple assignment $a,b:= b,a$, an elegant expression that
swaps the values of $a$ and $b$. We naturally understand this as a
parallel expression $a:=b \parallel b:= a$. If $a$ and $b$ have the
same value then the two assignments can be evaluated in sequence in
any order, yielding the same result independently of the chosen order;
they are parallel independent. If however they have distinct values,
the two sequential evaluations yield different results (and none
corresponds to the intended meaning); the two assignments are parallel
dependent.  Parallel dependence also typically occurs in cellular
automata when rules are applied to neighbor cells, because of the
overlap. Hence sequential applications of rules in an undetermined
order would result in non deterministic automata.

These examples show that there is a legitimate way of computing by
applying \emph{simultaneously} concurrent transformations that may not
be parallel independent, even though the result may not be reachable
by sequential transformations. Swapping the values of $a$ and $b$
cannot be performed by applying $a:= b$ or $b:= a$ sequentially. This
calls for a notion of \emph{parallel transformation} for defining the
result of such simultaneous applications of rules. 

One such transformation has been defined in \cite{BdlTE20c}, in an
algorithmic approach that is adopted here.  It is based on directed
graphs where vertices and arrows are equipped with \emph{sets} of
attributes and enables a definition of a \emph{union} of such graphs,
given in Section~\ref{sec-graphs}. This is a fundamental difference
with terms or termgraphs and leads to a natural definition of parallel
transformation in Section~\ref{sec-rules}.

The consequences of these definitions on parallel independence are
analyzed in Sections~\ref{sec-seqindep} and \ref{sec-parindep}. The
results of these sections are also from \cite{BdlTE20c}, we give them
here in a slightly simpler setting and without proofs, focusing on
comparisons with the algebraic approach to graph rewriting where
parallel independence has been originally formulated.

In Section~\ref{sec-EDP} we analyze the notion of parallel
rewriting. We first define a notion of \emph{regularity} that ensures the
absence of conflicts between concurrent rules. We then show that this
notion is too restricted to encompass parallel independence, and
generalize it to the \emph{effective deletion property} (also from
\cite{BdlTE20c}).

Section~\ref{sec-parcoh} is devoted to comparisons with the algebraic
notion of \emph{parallel coherence} from \cite{BdlTE20b}. It is shown
that its translation to the present framework, though more general
than regularity, is still too restricted to encompass parallel
independence. It is also shown to be the right algebraic translation
of the effective deletion property.
Concluding remarks and related works are presented in
Section~\ref{sec-conclusion}.

\section{Attributed Graphs}\label{sec-graphs}

We assume a many-sorted signature $\Sigma$ and a set $\Vars$ of
\emph{variables}, disjoint from $\Sigma$, such that every variable has
a $\Sigma$-sort. For any finite $X\subseteq \Vars$, $\Termsig{X}$
denotes the algebra of $\Sigma$-terms over $X$. For any
$\Sigma$-algebra $\Alg$, let $\carrier{\Alg}$ be the disjoint
union of the carrier sets of the $\Sigma$-sorts in $\Alg$.

An \emph{attributed graph} (or \emph{graph} for short) $G$ is a tuple
$\tuple{\nodes{G},\arrows{G},\leftf{G},\rightf{G},\algebrf{G},\attrf{G}}$
where $\nodes{G},\arrows{G}$ are sets whose elements are respectively
called \emph{vertices} and \emph{arrows}, $\leftf{G},\rightf{G}$ are
the \emph{source} and \emph{target} functions from $\arrows{G}$ to
$\nodes{G}$, $\algebrf{G}$ is a $\Sigma$-algebra and $\attrf{G}$ is an
\emph{attribution of $G$}, i.e., a function from
$\nodes{G}\cup \arrows{G}$ to $\Part{\carrier{\algebrf{G}}}$. The
elements of $\carrier{\algebrf{G}}$ are called \emph{attributes}, and
we assume that $\nodes{G}$, $\arrows{G}$ and $\carrier{\algebrf{G}}$
are pairwise disjoint.  $G$ is \emph{unlabeled} if
$\labelf{G}(x)=\ensvide$ for all $x\in \nodes{G}\cup \arrows{G}$, it
is \emph{finite} if the sets $\nodes{G}$, $\arrows{G}$ and
$\attrf{G}(x)$ are finite. The \emph{carrier} of $G$ is the set
$\carrier{G}\defeq \nodes{G}\cup \arrows{G}\cup
\carrier{\algebrf{G}}$.

A graph $H$ is a \emph{subgraph} of $G$, written $H\subalg G$, if the
\emph{underlying graph}
$\tuple{\nodes{H},\arrows{H},\leftf{H},\rightf{H}}$ of $H$ is a
subgraph of $G$'s underlying graph (in the usual sense),
$\algebrf{H} = \algebrf{G}$ and $\attrf{H}(x)\subseteq \attrf{G}(x)$
for all $x\in \nodes{H}\cup\arrows{H}$.

Graphs are better specified as pictures. Vertices and arrows will be
named and their attributes will be listed after each name, separated
from it by $\sepattr$ (which is omitted if the attribute is
$\ensvide$). Since graphs may not be connected, they will be
surrounded by a rectangle with rounded corners, as in:
\[H\ =\ 
  \raisebox{-3.2ex}{\begin{tikzpicture}
    \node[draw,rounded corners] at (0,0) 
    {\begin{tikzpicture}[scale=2.2]
        \node (x) at (0,0) {$x$};
        \node (y) at (1,0) {$y\sepattr 1$};
        \path[->,bend left] (x) edge node[fill=white,font=\footnotesize] {$f$} (y);
      \end{tikzpicture}};
  \end{tikzpicture}}\ \subalg\ 
  \raisebox{-4.7ex}{\begin{tikzpicture}
    \node[draw,rounded corners] at (0,0) 
    {\begin{tikzpicture}[scale=2.2]
        \node (x) at (0,0) {$x\sepattr 1$};
        \node (y) at (1,0) {$y\sepattr 0,1$};
        \node (z) at (1.5,0) {$z$};
        \path[->,bend left] (x) edge node[fill=white,font=\footnotesize] {$f$} (y);
        \path[->,bend right] (x) edge node[fill=white,font=\footnotesize] {$g\sepattr 0$} (y);
      \end{tikzpicture}};
  \end{tikzpicture}}\ =\ G\] where $H$ is the graph such that
$\nodes{H}=\set{x,y}$, $\arrows{H}=\set{f}$, $\leftf{H}(f)=x$,
$\rightf{H}(f)=y$, $\attrf{H}(x)=\attrf{H}(f)=\ensvide$,
$\attrf{H}(y)=\set{1}$ and similarly for $G$ (the $\Sigma$-algebra
$\algebrf{H}=\algebrf{G}$ must contain at least $0$ and $1$).

A \emph{morphism} $\am$ from graph $H$ to graph $G$, written
$\am:H\rightarrow G$, is a function from
$\carrier{H}$ to $\carrier{G}$ such that the restriction of $\am$ to
$\nodes{H}\cup\arrows{H}$ is a morphism from $H$'s to $G$'s underlying
graphs (that is, $\leftf{G}\circ{\am} = {\am}\circ\leftf{H}$ and
$\rightf{G}\circ{\am} = {\am}\circ\rightf{H}$, this restriction of
$\am$ is called the \emph{underlying graph morphism of $\am$}), the
restriction of $\am$ to $\carrier{\algebrf{H}}$ is a
$\Sigma$-homomorphism from $\algebrf{H}$ to $\algebrf{G}$, denoted
$\attrf{\am}$, and
$\attrf{\am}\circ\attrf{H}(x)\subseteq \attrf{G}\circ \am(x)$ for all
$x\in \nodes{H}\cup\arrows{H}$. Note that $H\subalg G$ iff
$\carrier{H} \subseteq \carrier{G}$ and the canonical injection from
$\carrier{H}$ to $\carrier{G}$ is a morphism from $H$ to $G$.  For all
$F\subalg H$, the \emph{image} $\am(F)$ is the smallest subgraph of
$G$ w.r.t. the order $\subalg$ such that $\restrf{\am}{\carrier{F}}$
is a morphism from $F$ to $\am(F)$.

An \emph{isomorphism} is a morphism that has an inverse morphism.  We
write $H\isog G$ if there is an isomorphism from $H$ to $G$.
A morphism $\mm:H\rightarrow G$ is a \emph{matching} if the underlying
graph morphism of $\mm$ is injective. For any $F\subalg H$ it is then
easy to see that
\[\mm(F)=\tuple{\,\mm(\nodes{F}),\ \mm(\arrows{F}),\ 
  \mm\circ\leftf{F}\circ\invf{\mm},\ 
  \mm\circ\rightf{F}\circ\invf{\mm},\  \algebrf{G},\ 
  \attrf{\mm}\circ\attrf{F}\circ\invf{\mm}\ }.\]
 
Given two attributions $l$ and $l'$ of $G$ let $l\setminus l'$
(resp. $l\cap l'$, $l\cup l'$) be the attribution of $G$ that maps any
$x$ to $l(x)\setminus l'(x)$ (resp. $l(x)\cap l'(x)$,
$l(x)\cup l'(x)$). If $l$ is an attribution of a subgraph
$H\subalg G$, it is implicitly extended to the attribution of $G$ that
is identical to $l$ on $\nodes{H}\cup\arrows{H}$ and maps any other
entry to $\ensvide$.

Unions of graphs can only be formed between \emph{joinable} graphs,
i.e., graphs that have a common part.  We start with a simpler notion
of joinable functions.

\begin{definition}[joinable functions]\label{def-joinablef}
  Two functions $f:D\rightarrow C$ and $g:D'\rightarrow C'$ are
  \emph{joinable} if $f(x)=g(x)$ for all $x\in D\cap D'$. Then, the
  \emph{meet} of $f$ and $g$ is the function
  $f \meetf g: D\cap D'\rightarrow C\cap C'$ that is the restriction
  of $f$ (or $g$) to $D\cap D'$. The \emph{join} $f \joinf g$ is the
  unique function from $D\cup D'$ to $C\cup C'$ such that
  $f=\restrf{(f\joinf g)}{D}$ and $g=\restrf{(f\joinf g)}{D'}$.

  For any set $I$ and any $I$-indexed family
  $(f_i:D_i\rightarrow C_i)_{i\in I}$ of pairwise joinable functions,
  let $\bigjoinf_{i\in I}f_i$ be the only function from
  $\bigcup_{i\in I}D_i$ to $\bigcup_{i\in I}C_i$ such that
  $f_i = \restrf{\big(\bigjoinf_{i\in I}f_i\big)}{D_i}$ for all
  $i\in I$.
\end{definition}

We see that any two restrictions $\restrf{f}{A}$ and $\restrf{f}{B}$
of the same function $f$ are joinable, and then
$\restrf{f}{A}\meetf \restrf{f}{B} = \restrf{f}{A\cap B}$ and
$\restrf{f}{A}\joinf \restrf{f}{B} = \restrf{f}{A\cup B}$. Conversely,
if $f$ and $g$ are joinable then each is a restriction of $f\joinf g$.

\begin{definition}[joinable graphs]\label{def-joinableg} 
  Two graphs $H$ and $G$ are \emph{joinable} if
  $\algebrf{H}=\algebrf{G}$,
  $\nodes{H}\cap\arrows{G} = \arrows{H}\cap\nodes{G} = \ensvide$, and
  the functions $\leftf{H}$ and $\leftf{G}$ (and similarly
  $\rightf{H}$ and $\rightf{G}$) are joinable.  We can then define the
  graphs
\begin{eqnarray*}
  H\gcap G &\defeq & \tuple{\ \nodes{H}\cap\nodes{G},\
                     \arrows{H}\cap\arrows{G},\
                     \leftf{H}\meetf\leftf{G},\
                     \rightf{H}\meetf\rightf{G},\
                     \algebrf{H},\  \attrf{H}\cap\attrf{G}\ },\\ 
  H\gcup G &\defeq & \tuple{\ \nodes{H}\cup\nodes{G},\
                     \arrows{H}\cup\arrows{G},\
                     \leftf{H}\joinf\leftf{G},\
                     \rightf{H}\joinf\rightf{G},\
                     \algebrf{H},\ \attrf{H}\cup\attrf{G}\ }. 
\end{eqnarray*}
Similarly, if $(G_i)_{i\in I}$ is an $I$-indexed family of graphs that
are pairwise joinable, and $\Alg$ is an algebra such that
$\Alg=\algebrf{G_i}$ for all $i\in I$, then let
\[\displaystyle \Gcup_{i\in I}G_i \ \defeq\ \tuple{\ \bigcup_{i\in I}\nodes{G_i},\
    \bigcup_{i\in I}\arrows{G_i},\ \bigjoinf_{i\in
      I}\leftf{G_i},\ \bigjoinf_{i\in I}\rightf{G_i},\ \Alg,\ \bigcup_{i\in
      I}\attrf{G_i}\ }.\]
\end{definition}

It is easy to see that these structures are graphs: the sets of
vertices and arrows are disjoint and the adjacency functions have the
correct domains and codomains. If $I=\ensvide$ the chosen algebra
$\Alg$ is generally obvious from the
context. Note that if $H$ and $G$ are joinable
then $H\gcap G \subalg H \subalg H\gcup G$. Similarly, if the $G_i$'s are pairwise joinable then 
$G_j\subalg \Gcup_{i\in I}G_i$ for all $j\in I$.
We see that any two subgraphs of $G$ are joinable, and that
$H\subalg G$ iff $H\gcap G = H$ iff $H\gcup G = G$. These operations
are commutative and, on triples of pairwise joinable graphs, they are
associative and distributive over each other. For any two graphs $H,G$
there exists $G'\isog G$ such that $H$ and $G'$ are joinable (one
possibility is to take $\nodes{G}'\cap \arrows{H}=\ensvide$ and
$\arrows{G}'\cap (\nodes{H}\cup\arrows{H})=\ensvide$).

For any sets $V$, $A$ and attribution $l$, we say that \emph{$G$ is
  disjoint from $V,A,l$} if $\nodes{G}\cap V=\ensvide$,
$\arrows{G}\cap A = \ensvide$ and $\attrf{G}(x)\cap l(x) = \ensvide$
for all $x\in\nodes{G}\cup\arrows{G}$. We write
$\delgraph{G}{V}{A}{l}$ for the largest subgraph of $G$
(w.r.t. $\subalg$) that is disjoint from $V,A,l$. This provides a natural
way of removing objects from an attributed graph. It is easy to see
that this subgraph always exists (it is the union of all subgraphs of
$G$ disjoint from $V,A,l$), hence rewriting steps will not be
restricted by a \emph{gluing condition} as in the Double-Pushout
approach (see \cite{EhrigEPT06}).

\section{Applying Rules in Parallel}\label{sec-rules}

\begin{definition}[rules, matchings]\label{def-rules}
  For any finite $X\subseteq \Vars$, a \emph{$(\Sigma,X)$-graph} is a
  finite graph $G$ such that $\algebrf{G} = \Termsig{X}$. Let
  $\Var{G}\ \defeq\
  \bigcup_{x\in\nodes{G}\cup\arrows{G}}\big(\bigcup_{t\in\attrf{G}(x)}
  \Var{t}\big)$, where $\Var{t}$ is the set of variables occurring in
  $t$.
  
  A \emph{rule} $r$ is a triple $\tuple{L,K,R}$ of $(\Sigma,X)$-graphs
  such that $L$ and $R$ are joinable, $L\gcap R\subalg K\subalg L$ and
  $\Var{L} = X$ (see Remark~\ref{rem-rules} below). The rule $r$ is \emph{unlabeled} if $L$, $K$ and $R$ are unlabeled.

  A \emph{matching of $r$ in} a graph $G$ is a matching $\mm$ from $L$
  to $G$ that is \emph{consistent}, i.e., such that
  $\attrf{\mm}(\attrf{L}(x)\setminus \attrf{K}(x))\cap
  \attrf{\mm}(\attrf{K}(x)) = \ensvide$ (or equivalently
  $\attrf{\mm}(\attrf{L}(x)\setminus \attrf{K}(x)) =
  \attrf{\mm}(\attrf{L}(x))\setminus \attrf{\mm}(\attrf{K}(x))$) for
  all $x\in \nodes{K}\cup\arrows{K}$.  We denote $\Matches{r}{G}$ the
  set of all matchings of $r$ in $G$ (they all have domain
  $\carrier{L}$).

  We consider finite sets $\R$ of rules such that for all $r,r'\in\R$,
  if $\tuple{L,K,R} = r \neq r' = \tuple{L',K',R'}$ then
  $\carrier{L}\neq\carrier{L'}$, so that
  $\Matches{r}{G}\cap \Matches{r'}{G} = \ensvide$ for any graph $G$;
  we then write $\Matches{\R}{G}$ for
  $\biguplus_{r\in\R}\Matches{r}{G}$. For any $\mm\in \Matches{\R}{G}$
  there is a unique rule $\rulem{\mm}\in\R$ such that
  $\mm\in \Matches{\rulem{\mm}}{G}$, and its components are denoted
  $\rulem{\mm} = \tuple{\Lg{\mm}, \Kg{\mm}, \Rg{\mm}}$.
\end{definition}

\begin{remark}\label{rem-rules}
  If $X$ were allowed to contain a variable $v$ not occurring in $L$,
  then $v$ would freely match any element of $\algebrf{G}$ and the set
  $\Matches{r}{G}$ would contain as many matchings with essentially
  the same effect. Also note that $\Var{R}\subseteq\Var{L}$, $R$ and
  $K$ are joinable and $R\gcap K = L\gcap R$. The fact that $K$ is not
  required to be a subgraph of $R$ allows the possible deletion by
  other rules of data matched by $K$ but not by $R$. 
\end{remark}

A rewrite step may involve the creation of new vertices in a graph,
corresponding to the vertices of a rule that have no match in the
input graph, i.e., those in $\nodes{R}\setminus\nodes{L}$ (or
similarly may create new arrows). These vertices should really be new,
not only different from the vertices of the original graph but also
different from the vertices created by other transformations
(corresponding to other matchings in the graph).  We
simply reuse the vertices $x$ from $\nodes{R}\setminus\nodes{L}$ by
\emph{indexing} them with any relevant matching $\mm$, each time
yielding a new vertex $\tuple{x,\mm}$ which is obviously different
from any new vertex $\tuple{x,\nm}$ for any other matching
$\nm\neq\mm$, and also from any vertex of $G$. This is similar to a
construction of colimits in the category of sets.

\begin{definition}[graph $\RImg{G}{\mm}$ and matching $\liftm{\mm}$]\label{def-RImg}  
  For any rule $r=\tuple{L,K,R}$, graph $G$ and $\mm\in\Matches{r}{G}$
  we define a graph $\RImg{G}{\mm}$ together with a matching
  $\liftm{\mm}$ of $R$ in $\RImg{G}{\mm}$. We first define the sets
  \[\RImg{\nodes{G}}{\mm} \defeq {\mm}(\nodes{R}\cap
    \nodes{K})\cup ((\nodes{R} \setminus \nodes{K}) \times
    \set{\mm})\ \text{ and }\ \RImg{\arrows{G}}{\mm} \defeq
    {\mm}(\arrows{R}\cap \arrows{K})\cup ((\arrows{R} \setminus
    \arrows{K}) \times \set{\mm}).\] Next we define $\liftm{\mm}$ by:
  $\liftm{\attrf{\mm}} \defeq \attrf{\mm}$ and for all
  $x\in\nodes{R}\cup\arrows{R}$, if $x\in\nodes{K}\cup\arrows{K}$ then
  $\liftm{{\mm}}(x) \defeq {\mm}(x)$ else
  $\liftm{{\mm}}(x) \defeq \tuple{x,\mm}$. Since the restriction of
  $\liftm{{\mm}}$ to $\nodes{R}\cup\arrows{R}$ is bijective, then
  $\liftm{\mm}$ is a matching from $R$ to the graph
  \[\RImg{G}{\mm}\ \defeq\
    \tuple{\ \RImg{\nodes{G}}{\mm},\ \RImg{\arrows{G}}{\mm},\ 
      \liftm{{\mm}}\circ \leftf{R}\circ
      \invf{\liftm{{\mm}}},\ \liftm{{\mm}}\circ
      \rightf{R}\circ \invf{\liftm{{\mm}}},\ \algebrf{G},\ 
      \liftm{\attrf{\mm}}\circ \attrf{R}\circ
      \invf{\liftm{{\mm}}}\ }.\]
\end{definition}

By construction $\liftm{\mm}(R)=\RImg{G}{\mm}$, the matchings $\mm$ and
$\liftm{\mm}$ are joinable and $\mm\meetf\liftm{\mm}$ is a matching
from $R\gcap K$ to $\mm(R\gcap K)$. It is easy to see that the graph
$G$ and the graphs $\RImg{G}{\mm}$ are pairwise joinable.

For any set $M\subseteq\Matches{\R}{G}$ of matchings in a graph $G$ we
define below how to transform $G$ by applying simultaneously the rules
associated with matches in $M$. This simply consists in first removing
simultaneously all the vertices, arrows and attributes that are
matched by $\Lg{\mm}$ but not by $\Kg{\mm}$ for any $\mm\in M$, and
then in adding simultaneously all the images of the right-hand sides
$\Rg{\mm}$.

\begin{definition}[graph $\Ipgr{G}{M}$]\label{def-parallel-rew}
  For any graph $G$ and set $M\subseteq \Matches{\R}{G}$ let
  \[\Ipgr{G}{M} \defeq \delgraph{G}{\Vd{M}}{\Ad{M}}{\ld{M}} \gcup \Gcup_{\mm\in
      M}\RImg{G}{\mm}\text{ where}\]
  \[\Vd{M} \defeq \bigcup_{\mm\in M}{\mm}(\nodesLg{\mm}\setminus
    \nodesKg{\mm}),\ \Ad{M} \defeq \bigcup_{\mm\in
      M}{\mm}(\arrowsLg{\mm}\setminus \arrowsKg{\mm})\text{ and }
    \ld{M}\defeq \bigcup_{\mm\in
      M}\attrf{\mm}\circ(\attrfLg{\mm}\setminus
    \attrfKg{\mm})\circ\invf{\mm}.\]
    If $M$ is a singleton $\set{\mm}$ we write $\Ipgr{G}{\mm}$ for
    $\Ipgr{G}{M}$, $\Vd{\mm}$ for $\Vd{M}$, etc.
\end{definition}

\begin{example}\label{ex-Ipgr}
  We represent the simultaneous assignment $a,b := b,a$ by two rules
  that correspond to the simple assignments $a:=b$ and $b:=a$. For
  this we use a signature $\Sigma$ with two constants $a$ and $b$ of
  sort \texttt{identifier}, and a set of variables $\Vars$ with two
  variables $u$ and $v$ of sort \texttt{integer}. The environment is
  represented by two nodes $x$ and $y$, each attributed by an
  identifier and its value. More precisely, let $\Alg$ be the
  $\Sigma$-algebra where the sort \texttt{integer} is interpreted as
  $\Alg_{\texttt{integer}} = \Zed$, the sort \texttt{identifier} as
  $\Alg_{\texttt{identifier}} = \set{a,b}$ and each constant as
  itself. We consider the environment where $a=1$ and $b=-1$; this is
  represented by the graph
  \[G\ =\ \raisebox{-1.3ex}{\begin{tikzpicture}
        \node[draw,rounded corners] at (0,0) {$x\sepattr a,1\hspace{2em}
          y\sepattr b,-1$};
      \end{tikzpicture}}\]

  We consider the rules
  \begin{align*}
  r_1& = \tuple{\ \raisebox{-1.3ex}{\begin{tikzpicture}
  \node[draw,rounded corners] at (0,0) {$x_1\sepattr a,u\hspace{2em}
    y_1\sepattr b,v$};
  \end{tikzpicture}}\ ,\ \raisebox{-1.3ex}{\begin{tikzpicture}
  \node[draw,rounded corners] at (0,0) {$x_1\sepattr a\hspace{2em}
    y_1\sepattr b,v$};
  \end{tikzpicture}}\ ,\ \raisebox{-1.3ex}{\begin{tikzpicture}
  \node[draw,rounded corners] at (0,0) {$x_1\sepattr a,v$};
  \end{tikzpicture}}\ }\\
  r_2& = \tuple{\ \raisebox{-1.3ex}{\begin{tikzpicture}
  \node[draw,rounded corners] at (0,0) {$x_2\sepattr a,u\hspace{2em}
    y_2\sepattr b,v$};
  \end{tikzpicture}}\ ,\ \raisebox{-1.3ex}{\begin{tikzpicture}
  \node[draw,rounded corners] at (0,0) {$x_2\sepattr a,u\hspace{2em}
    y_2\sepattr b$};
  \end{tikzpicture}}\ ,\ \raisebox{-1.3ex}{\begin{tikzpicture}
  \node[draw,rounded corners] at (0,0) {$y_2\sepattr b,u$};
  \end{tikzpicture}}\ }
  \end{align*}
  that correspond to $a:=b$ and $b:=a$ respectively. We see that $r_1$
  removes the content $u$ associated to $a$ and replaces it by the
  content $v$ associated to $b$. There is exactly one matching $\mm_i$
  of rule $r_i$ in $G$ for $i=1,2$, given by
\[\begin{array}{c|cccccc}
      &x_1&y_1&a&b&u&v\\ \hline
      \mm_1&x&y&a&b&1&-1
    \end{array}\hspace{2em} 
    \begin{array}{c|cccccc}
      &x_2&y_2&a&b&u&v\\ \hline
      \mm_2&x&y&a&b&1&-1
    \end{array}\]

  Let $M=\set{\mm_1,\mm_2}$. Since no vertex or arrow is removed we
  have $\Vd{M}=\Ad{M}=\ensvide$. We also have
  \begin{align*}
    \ld{M}(x) &= \big(\attrf{\mm}_1\circ(\attrfLg{\mm_1}\setminus
                \attrfKg{\mm_1})\circ\invf{\mm_1}(x)\big) \cup \big(\attrf{\mm}_2\circ(\attrfLg{\mm_2}\setminus
                \attrfKg{\mm_2})\circ\invf{\mm_2}(x)\big)\\
              &= \attrf{\mm}_1\big(\attrfLg{\mm_1}(x_1)\setminus
                \attrfKg{\mm_1}(x_1)\big) \cup \attrf{\mm}_2\big(\attrfLg{\mm_2}(x_2)\setminus
                \attrfKg{\mm_2}(x_2)\big)\\
              &= \attrf{\mm}_1(\set{u})\cup \attrf{\mm}_2(\ensvide)\\
              &= \set{1}
  \end{align*}
  and similarly $\ld{M}(y)=\set{-1}$, so that
  $\delgraph{G}{\Vd{M}}{\Ad{M}}{\ld{M}}\ =\ \raisebox{-1.3ex}{\begin{tikzpicture}
        \node[draw,rounded corners] at (0,0) {$x\sepattr a\hspace{2em}
          y\sepattr b$};
      \end{tikzpicture}}$. Finally, we see that
    \begin{align*}
      \RImg{G}{\mm_1} &=\liftm{\mm_1}(\Rg{\mm_1}) = \liftm{\mm_1}(\ \raisebox{-1.3ex}{\begin{tikzpicture}
  \node[draw,rounded corners] at (0,0) {$x_1\sepattr a,v$};
  \end{tikzpicture}}\ ) = \raisebox{-1.3ex}{\begin{tikzpicture}
  \node[draw,rounded corners] at (0,0) {$x\sepattr a,-1$};
  \end{tikzpicture}}\\
      \RImg{G}{\mm_2} &=\liftm{\mm_2}(\Rg{\mm_2}) = \liftm{\mm_2}(\ \raisebox{-1.3ex}{\begin{tikzpicture}
  \node[draw,rounded corners] at (0,0) {$y_2\sepattr b,u$};
  \end{tikzpicture}}\ ) = \raisebox{-1.3ex}{\begin{tikzpicture}
  \node[draw,rounded corners] at (0,0) {$y\sepattr b,1$};
  \end{tikzpicture}}
    \end{align*}
and hence \[\Ipgr{G}{M}\ =\ \raisebox{-1.3ex}{\begin{tikzpicture}
        \node[draw,rounded corners] at (0,0) {$x\sepattr a\hspace{2em}
          y\sepattr b$};
      \end{tikzpicture}}\ \gcup\ \raisebox{-1.3ex}{\begin{tikzpicture}
  \node[draw,rounded corners] at (0,0) {$x\sepattr a,-1$};
  \end{tikzpicture}}\ \gcup\ \raisebox{-1.3ex}{\begin{tikzpicture}
  \node[draw,rounded corners] at (0,0) {$y\sepattr b,1$};
  \end{tikzpicture}} \ =\ \raisebox{-1.3ex}{\begin{tikzpicture}
        \node[draw,rounded corners] at (0,0) {$x\sepattr a,-1\hspace{2em}
          y\sepattr b,1$};
      \end{tikzpicture}}\] that represents the environment where
  $a=-1$ and $b=1$, i.e., where the initial values of $a$ and $b$ have
  been swapped. Note that the same transformation can obviously be
  performed by the single rule 
\[ \tuple{\ \raisebox{-1.3ex}{\begin{tikzpicture}
  \node[draw,rounded corners] at (0,0) {$x_1\sepattr a,u\hspace{2em}
    y_1\sepattr b,v$};
  \end{tikzpicture}}\ ,\ \raisebox{-1.3ex}{\begin{tikzpicture}
  \node[draw,rounded corners] at (0,0) {$x_1\sepattr a\hspace{2em}
    y_1\sepattr b$};
  \end{tikzpicture}}\ ,\ \raisebox{-1.3ex}{\begin{tikzpicture}
  \node[draw,rounded corners] at (0,0) {$x_1\sepattr a,v\hspace{2em}
    y_1\sepattr b,u$};
  \end{tikzpicture}}\ }\]
More importantly, this rule can be
  computed from $r_1$ and $r_2$ (see \cite{BdlTE20b}).
\end{example}

In Definition~\ref{def-parallel-rew} $\Ipgr{G}{M}$ is guaranteed
to be a graph since the $\gcup$ operation is only applied on joinable
graphs. Every morphism $\liftm{\mm}$ is a matching from the right-hand
side $\Rg{\mm}$ to the result $\Ipgr{G}{M}$ of the transformation.
The case where $M$ is a singleton defines the classical semantics of
one sequential rewrite step.

\begin{definition}[sequential rewriting]\label{def-seqrew}
  For any finite set of rules $\R$, we define the relation $\seqr{\R}$
  of \emph{sequential rewriting} by stating that, for all graphs $G$
  and $H$, \[G\seqr{\R}H\text{ iff there exists some }\mm\in\Matches{\R}{G}
  \text{ such that }H\isog \Ipgr{G}{\mm}.\]
\end{definition}

\section{Sequential  Independence}\label{sec-seqindep}

In the Double-Pushout approach to graph rewriting (see
\cite{EhrigEPT06}), production rules are spans
$L\leftarrow K\rightarrow R$, with two morphisms from an
\emph{interface} $K$ to the left- and right-hand sides $L$, $R$. These
objects and morphisms are taken in a category, possibly of some sort
of graphs. Direct derivations are diagrams
\begin{center}
  \begin{tikzpicture}[xscale=1.65, yscale=1.6]
    \node (G) at (-0.5,0) {$H$};
    \node (L1) at (-0.5,1) {$R$}; \node (K1) at (-1.5,1) {$K$};
    \node (R1) at (-2.5,1) {$L$};
    \node (D1) at (-1.5,0) {$D$}; \node (H1) at (-2.5,0) {$G$};
    \path[->] (K1) edge (L1);
    \path[->] (K1) edge (R1);
    \path[->] (L1) edge (G);
    \path[->] (K1) edge (D1);
    \path[->] (D1) edge (G);
    \path[->] (D1) edge (H1);
    \path[->] (R1) edge node[fill=white, font=\footnotesize] {$\mm_1$} (H1);
  \end{tikzpicture}    
\end{center}
where the two squares are \emph{pushouts}, i.e., a form of
union. Since objects, say $D$ and $R$, can always be understood modulo
isomorphisms, their union cannot be defined without specifying what
they have in common; this is the rôle of $K$ and of the morphisms from
$K$ to $D$ and $R$. If for instance $K$ is empty then the pushout $H$
is the disjoint union (or direct sum, or co-product) of $D$ and
$R$. Hence the right square adds something to $D$, and inversely the
left square removes something from $G$. Hence $H$ is obtained from $G$
by removing an image of $L$ and writing an image of $R$, with the
possibility that $L$ and $R$ share a common part given by $K$. This
very general approach has a drawback: depending on $G$ and $\mm_1$ the
object $D$ may not exist, and if it does it may not be unique.

In this approach sequential independence is a property of two
consecutive direct transformations, formulated as the existence of two
commuting morphisms $j_1$ and $j_2$ as shown below.

\begin{center}
  \begin{tikzpicture}[xscale=1.65, yscale=1.8]
    \node (L) at (0.5,1) {$L_2$}; \node (K) at (1.5,1) {$K_2$};
    \node (R) at (2.5,1) {$R_2$}; \node (G) at (0,0) {$H_1$};
    \node (D) at (1.5,0) {$D_2$}; \node (H) at (2.5,0) {$H_2$};
    \path[->] (K) edge (L);
    \path[->] (K) edge (R);
    \path[->] (L) edge node[fill=white, font=\footnotesize, near end] {$\mm_2$} (G);
    \path[->] (K) edge (D);
    \path[->] (D) edge (G);
    \path[->] (D) edge (H);
    \path[->] (R) edge (H);
    \node (L1) at (-0.5,1) {$R_1$}; \node (K1) at (-1.5,1) {$K_1$};
    \node (R1) at (-2.5,1) {$L_1$};
    \node (D1) at (-1.5,0) {$D_1$}; \node (H1) at (-2.5,0) {$G$};
    \path[->] (K1) edge (L1);
    \path[->] (K1) edge (R1);
    \path[->] (L1) edge (G);
    \path[->] (K1) edge (D1);
    \path[->] (D1) edge (G);
    \path[->] (D1) edge (H1);
    \path[->] (R1) edge node[fill=white, font=\footnotesize] {$\mm_1$} (H1);
  \path[-] (L1) edge[draw=white, line width=3pt]  (D);
  \path[-] (L) edge[draw=white, line width=3pt]  (D1);
    \path[->,dashed] (L1) edge node[fill=white, font=\footnotesize] {$j_1$} (D);
    \path[->,dashed] (L) edge node[fill=white, font=\footnotesize] {$j_2$} (D1);
  \end{tikzpicture}    
\end{center}
It is then proven by the Local Church-Rosser Theorem that the two
production rules can be applied in reverse order to $G$ and yield the
same result $H_2$ (we may call this the \emph{swapping property}). Of
course, the matchings $\mm_1:L_1\rightarrow G$ and
$\mm_2:L_2\rightarrow H_1$ are then replaced by other matchings
$\mm'_1:L_1\rightarrow H'_1$ and $\mm'_2:L_2\rightarrow G$ that are
related to $\mm_1$ and $\mm_2$. A drawback of this definition is that
it does not account for longer sequences of direct
transformations. Indeed, if three consecutive steps are given by
$\tuple{\mm_1,\mm_2,\mm_3}$, it is possible to swap $\mm_1$ with
$\mm_2$ if they are sequential independent, and similarly for $\mm_2$
and $\mm_3$, but this does not imply that $\mm_1$ and $\mm_3$ can be
swapped under these hypotheses (because the matchings, and hence the
direct transformations, are modified by the swapping operations). We
would need to express sequential independence between $\mm_1$ and
$\mm_3$, but the definition does not apply since they are not
consecutive steps. More elaborate notions of equivalence between
sequences of direct transformations are thus required (see the notion
of \emph{shift equivalence} in \cite[chapter
3.5]{HndBkCorradiniMREHL97}).

Because of the specificities of our framework (no pushouts, horizontal
morphisms are only canonical injections, and there may be no such morphism
from $K$ to $R$) we need a different definition of sequential
independence. It is natural to think of the swapping property itself
as the definition of sequential independence, since it describes the
operational meaning of parallel independence, but we are faced with
another problem. We are dealing with possibly infinite sets of
matchings of rules in a graph, and we cannot form a notion of infinite
sequences of rewrite steps (because each step may both remove and add
data). Yet we do not wish to restrict the notion to finite sets, not
simply for the sake of generality but also because it is closely
related to parallel independence, a notion that can naturally be
defined on infinite sets (see Section~\ref{sec-parindep}).

We may however use Definition \ref{def-parallel-rew} to handle
infinite sets of matchings, by using the graph $\Ipgr{G}{M}$ to stand
for the result of an (independent) sequence of transformations.  We
may thus express sequential independence as a generalized swapping
property, where the swap is performed between one transformation and
all the others (taken in parallel).  Yet this definition would not
imply that all subsets of a sequential independent set are sequential
independent, hence it needs to be stated in a more general way, by
swapping any transformation with any others (and not only with all the
others).

\begin{definition}\label{def-seqindep}[sequential independence]
  For any graph $G$ and set $M \subseteq\Matches{\R}{G}$, we say that
  $M$ is \emph{sequential independent} if for all $N\subseteq M$ and
  all $\mm\in M\setminus N$, 
  \begin{itemize}
  \item $\mm(\Lg{\mm})\subalg \Ipgr{G}{N}$, hence there is a canonical
    injection $j$ from $\mm(\Lg{\mm})$ to $\Ipgr{G}{N}$,
  \item there exists an isomorphism $\am$ such that
    $\am(\Ipgr{G}{N\cup\set{\mm}})=
    \Ipgr{\big(\Ipgr{G}{N}\big)}{j\circ\mm}$
    and $\am$ is the identity on $G$.
  \end{itemize}
\end{definition}

The isomorphism $\am$ in~Definition \ref{def-seqindep} is necessary to
account for the difference between the isomorphic graphs
$\liftm{\mm}(\Rg{\mm})$ and $\liftm{(j\circ \mm)}(\Rg{\mm})$, i.e., to
transform vertices or arrows of the form $\tuple{x,\mm}$ into
$\tuple{x,j\circ\mm}$ (but there is no need to be that specific in the
definition).

It is then easy to see (by induction on the cardinality of $M$) that
\begin{prop}\label{prop-seqindep}
  For any graph $G$ and finite set $M \subseteq\Matches{\R}{G}$, if $M$ is
  sequential independent then \[G \seqrew{\R} \Ipgr{G}{M}.\]
\end{prop}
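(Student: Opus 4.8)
The plan is to proceed by induction on $\card{M}$, as suggested by the statement. The base case $M=\ensvide$ is immediate since $G \isog \Ipgr{G}{\ensvide}$ and the empty sequence witnesses $G \seqrew{\R} G$. For the inductive step, assume the result holds for all sets of cardinality $n$, and let $M$ be sequential independent with $\card{M}=n+1$. Pick any $\mm\in M$ and set $N \defeq M\setminus\set{\mm}$, so $\card{N}=n$. The first task is to check that $N$ is itself sequential independent; this is immediate from Definition~\ref{def-seqindep}, since the quantification there ranges over all $N'\subseteq M$ and all $\nm\in M\setminus N'$, and every such pair arising from $N$ also arises from $M$. Hence by the induction hypothesis $G \seqrew{\R} \Ipgr{G}{N}$.

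Next, I would invoke sequential independence of $M$ with the particular choice of subset $N$ and element $\mm\in M\setminus N$. The first bullet gives a canonical injection $j$ from $\mm(\Lg{\mm})$ to $\Ipgr{G}{N}$, and one checks that $j\circ\mm$ is again a consistent matching of $\rulem{\mm}$ in $\Ipgr{G}{N}$ (injectivity of the underlying graph morphism is preserved by post-composition with the injection $j$, and consistency is a condition on $\attrf{\mm}$, which is unchanged since $\attrf{j}$ is the inclusion of $\carrier{\algebrf{\Ipgr{G}{N}}}=\carrier{\Termsig{X}}$ into itself—here one uses that the algebra component is the same term algebra throughout, as all graphs involved share $\algebrf{G}$). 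Therefore $\Ipgr{G}{N} \seqr{\R} \Ipgr{\big(\Ipgr{G}{N}\big)}{j\circ\mm}$ by Definition~\ref{def-seqrew}, using that a graph rewrites to anything isomorphic to the result of applying a single matching.

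Finally, the second bullet of Definition~\ref{def-seqindep} provides an isomorphism $\am$ with $\am(\Ipgr{G}{M}) = \Ipgr{\big(\Ipgr{G}{N}\big)}{j\circ\mm}$, so in particular $\Ipgr{G}{M} \isog \Ipgr{\big(\Ipgr{G}{N}\big)}{j\circ\mm}$. Chaining the induction hypothesis with this last single step yields $G \seqrew{\R} \Ipgr{G}{N} \seqr{\R} \Ipgr{\big(\Ipgr{G}{N}\big)}{j\circ\mm} \isog \Ipgr{G}{M}$, and since $\seqr{\R}$ and hence $\seqrew{\R}$ are defined up to isomorphism of the target, this gives $G \seqrew{\R} \Ipgr{G}{M}$, completing the induction.

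The only genuinely delicate point is the verification that $j\circ\mm$ qualifies as a matching in $\Matches{\R}{\Ipgr{G}{N}}$ — i.e. that consistency survives post-composition with the canonical injection $j$. I expect this to be routine: $j$ does not alter the attribution homomorphism $\attrf{\mm}$ (it acts as the identity on the common algebra $\Termsig{X}$), and it is injective on vertices and arrows, so both the matching condition and the consistency equation $\attrf{\mm}(\attrfLg{\mm}(x)\setminus\attrfKg{\mm}(x))\cap\attrf{\mm}(\attrfKg{\mm}(x))=\ensvide$ transfer verbatim. Beyond that, the argument is a straightforward bookkeeping of the definitions, which is why the paper states the proposition as being ``easy to see.''
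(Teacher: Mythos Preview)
Your argument is correct and follows exactly the approach the paper indicates (induction on $\card{M}$, peeling off one matching and using both bullets of Definition~\ref{def-seqindep}); the paper itself omits the details and simply states the result as easy. One small slip: the algebra carried by $\mm(\Lg{\mm})$ and by $\Ipgr{G}{N}$ is $\algebrf{G}$, not $\Termsig{X}$ --- but your conclusion that $\attrf{j}$ is the identity on this common algebra, and hence $\attrf{j\circ\mm}=\attrf{\mm}$, is unaffected.
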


Of course there is usually more than one sequence of rewriting steps
from $G$ to $\Ipgr{G}{M}$, since under the hypothesis they can be
swapped; but without it there is generally none (as illustrated in
Example~\ref{ex-Ipgr}). And the fact that there is one such sequence
does not imply sequential independence, i.e., the converse of
Proposition \ref{prop-seqindep} is obviously not true.

\section{Parallel Independence}\label{sec-parindep}

In the Double-Pushout approach, parallel independence is a property of
two direct transformations of the same object $G$, formulated as the
existence of two commuting morphisms $j_1$ and $j_2$ as shown below.

\begin{center}
  \begin{tikzpicture}[xscale=1.65, yscale=1.8]
    \node (L) at (0.5,1) {$L_2$}; \node (K) at (1.5,1) {$K_2$};
    \node (R) at (2.5,1) {$R_2$}; \node (G) at (0,0) {$G$};
    \node (D) at (1.5,0) {$D_2$}; \node (H) at (2.5,0) {$H_2$};
    \path[->] (K) edge (L);
    \path[->] (K) edge (R);
    \path[->] (L) edge node[fill=white, font=\footnotesize, near end] {$\nm$} (G);
    \path[->] (K) edge (D);
    \path[->] (D) edge (G);
    \path[->] (D) edge (H);
    \path[->] (R) edge (H);
    \node (L1) at (-0.5,1) {$L_1$}; \node (K1) at (-1.5,1) {$K_1$};
    \node (R1) at (-2.5,1) {$R_1$};
    \node (D1) at (-1.5,0) {$D_1$}; \node (H1) at (-2.5,0) {$H_1$};
    \path[->] (K1) edge (L1);
    \path[->] (K1) edge (R1);
    \path[->] (L1) edge node[fill=white, font=\footnotesize, near end] {$\mm$} (G);
    \path[->] (K1) edge (D1);
    \path[->] (D1) edge (G);
    \path[->] (D1) edge (H1);
    \path[->] (R1) edge (H1);
  \path[-] (L1) edge[draw=white, line width=3pt]  (D);
  \path[-] (L) edge[draw=white, line width=3pt]  (D1);
    \path[->,dashed] (L1) edge node[fill=white, font=\footnotesize] {$j_1$} (D);
    \path[->,dashed] (L) edge node[fill=white, font=\footnotesize] {$j_2$} (D1);
  \end{tikzpicture}    
\end{center}

The Local Church-Rosser Theorem mentioned above actually shows that
$\mm$ and $\nm$ are parallel independent iff they correspond to a
sequential independent pair $\tuple{\mm,\nm'}$ (where
$\nm':L_2\rightarrow H_1$ is related to $\nm$). It is the symmetry
between $\mm$ and $\nm$ that entails the swapping property. This is
remarkable since parallel independence does not refer to the
\emph{results} of the transformations involved, while the result of
the sequences of transformations is central in the swapping property
(as in Definition~\ref{def-seqindep}).

This definition of parallel independence can easily be lifted to sets
$M$ of matchings (or direct transformations) by considering all
possible pairs $\mm,\nm\in M$, with a slight caveat. In this
definition the two direct transformations may be identical, thus
stating a property of a single transformation that is not shared by
all direct transformations. But Definition~\ref{def-parallel-rew} does
not allow to apply any member $\mm$ of $M$ more than once (because
applying $\mm$ any number of times in parallel would jeopardize
determinism of $\fullPRr{\R}$, see Definition~\ref{def-EDP}
below). For this reason we will only consider pairs of distinct
matchings (so that singletons $M$ shall be considered as parallel
independent, see below).

Our goal is therefore to formulate parallel independence in the
present framework, in order to obtain an equivalence similar to the
Local Church-Rosser Theorem. Considering that the pushout complement
$D_1$ is replaced by the graph
$\delgraph{G}{\Vd{\mm}}{\Ad{\mm}}{\ld{\mm}}$, the commuting property
of $j_2$ amounts to
$\nm(L_2)\subalg \delgraph{G}{\Vd{\mm}}{\Ad{\mm}}{\ld{\mm}}$, that can
be more elegantly expressed as
$\nm(L_2) \gcap \mm(L_1) \subalg \mm(K_1)$, or
$\nm(\Lg{\nm}) \gcap \mm(\Lg{\mm}) \subalg \mm(\Kg{\mm})$ using our
notations.  This simply means that any graph item that is matched by
two concurrent rules cannot be removed. The commuting property of
$j_1$ is obtained by swapping $\mm$ and $\nm$.

However, our treatment of attributes makes it possible to recover in
the right-hand side an attribute that has been deleted in the
left-hand side (this is of course not possible for vertices or
arrows). This possibility should therefore be accounted for in the
notion of parallel independence, i.e., an attribute that is matched
twice may be deleted provided it is recovered. This can be expressed
as
\[\nm(\Lg{\nm}) \gcap \mm(\Lg{\mm}) \subalg \mm(\Kg{\mm}) \gcup
  \liftm{\mm}(\Rg{\mm}) \text{ for all } \mm,\nm\in M \text{ such that
  } \mm\neq \nm.\] However, this is not a sufficient condition for
sequential independence.

\begin{example}\label{ex-notseqindep}
  We consider the following graph and rules:
  \begin{align*}
    G &= \raisebox{-1.3ex}{\begin{tikzpicture}
    \node[draw,rounded corners] at (0,0)
    {$x\sepattr 0$};
  \end{tikzpicture}} \text{ where } \carrier{\algebrf{G}}=\set{0}\\
    r_1 &= (\,\raisebox{-1.3ex}{\begin{tikzpicture}
    \node[draw,rounded corners] at (0,0)
    {$x_1\sepattr 0$};
  \end{tikzpicture}},\, 
  \raisebox{-1ex}{\begin{tikzpicture}
    \node[draw,rounded corners] at (0,0)
    {$x_1$};
  \end{tikzpicture}},\,\raisebox{-1ex}{\begin{tikzpicture}
    \node[draw,rounded corners] at (0,0)
    {$x_1$};
  \end{tikzpicture}}\,)\\
    r_2 &=(\,\raisebox{-1ex}{\begin{tikzpicture}
    \node[draw,rounded corners] at (0,0)
    {$x_2$};
  \end{tikzpicture}},\, \raisebox{-1ex}{\begin{tikzpicture}
    \node[draw,rounded corners] at (0,0)
    {$x_2$};
  \end{tikzpicture}},\, 
  \raisebox{-1.3ex}{\begin{tikzpicture}
    \node[draw,rounded corners] at (0,0)
    {$x_2\sepattr 0$};
  \end{tikzpicture}}\,)
  \end{align*}
  There is a unique matching $\mm$ of $r_1$ (resp. $\nm$ of $r_2$) in
  $G$, given by $\mm(x_1)=\nm(x_2)=x$ and
  $\attrf{\mm}(0)=\attrf{\nm}(0)=0$. We see that $M=\set{\mm,\nm}$ is
  not sequential independent. Indeed, let $N=\set{\nm}$, then
  $G=\mm(\Lg{\mm})\subalg \Ipgr{G}{N}=G$ (hence $j$ is the identity
  morphism of $G$), but
  $\Ipgr{\big(\Ipgr{G}{N}\big)}{\mm} = \Ipgr{G}{\mm} =
  \raisebox{-0.7ex}{\begin{tikzpicture} \node[draw,rounded corners] at
      (0,0) {$x$};
    \end{tikzpicture}}$ is not isomorphic to $\Ipgr{G}{M}=G$.

  Yet we see that
  \begin{align*}
    \nm(\Lg{\nm}) \gcap \mm(\Lg{\mm}) = \raisebox{-0.7ex}{\begin{tikzpicture} \node[draw,rounded corners] at
      (0,0) {$x$};
    \end{tikzpicture}} &\subalg \raisebox{-0.7ex}{\begin{tikzpicture} \node[draw,rounded corners] at
      (0,0) {$x$};
    \end{tikzpicture}} =  \mm(\Kg{\mm}) \gcup
                         \liftm{\mm}(\Rg{\mm}) \\
    \mm(\Lg{\mm}) \gcap \nm(\Lg{\nm}) = \raisebox{-0.7ex}{\begin{tikzpicture} \node[draw,rounded corners] at
      (0,0) {$x$};
    \end{tikzpicture}} &\subalg \raisebox{-1.3ex}{\begin{tikzpicture}
    \node[draw,rounded corners] at (0,0)
    {$x\sepattr 0$};
  \end{tikzpicture}} = \nm(\Kg{\nm}) \gcup \liftm{\nm}(\Rg{\nm}),
  \end{align*}
  which proves that this condition is true
  for all pairs of distinct elements of $M$; hence it is not
  sufficient to ensure sequential independence.
\end{example}
 
The problem in Example~\ref{ex-notseqindep} is that the attribute 0 of
$x$ is considered as being matched only once (by $\Lg{\mm}$), while it
is actually also matched by $\Rg{\nm}$.  This leads to the following
definition.

\begin{definition}[parallel independence]\label{def-parindep}
  For any graph $G$ and set $M \subseteq\Matches{\R}{G}$, we say that
  $M$ is \emph{parallel independent} if
  \[(\nm(\Lg{\nm})\gcup \liftm{\nm}(\Rg{\nm}))\gcap \mm(\Lg{\mm})\
    \subalg\ \mm(\Kg{\mm}) \gcup \liftm{\mm}(\Rg{\mm}) \text{\ \ for all } \mm,\nm\in M
    \text{ such that } \mm\neq \nm.\]
\end{definition}

This definition may seem strange, but it is easy to see that on
unlabeled graphs it amounts to
$\nm(\Lg{\nm})\gcap \mm(\Lg{\mm})\subalg \mm(\Kg{\mm})$ for all
$\mm\neq\nm$, i.e., to the standard algebraic notion of parallel
independence (translated to the present framework). 

It turns out that Definition~\ref{def-parindep} provides the expected
characterization of sequential independence.

\begin{theorem}\label{thm-para}
  For any graph $G$ and set $M\subseteq\Matches{\R}{G}$, $M$ is parallel
  independent iff $M$ is sequential independent.
\end{theorem}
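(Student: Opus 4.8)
The plan is to prove the two implications separately, with the bulk of the work devoted to showing that parallel independence implies sequential independence. Throughout I would work with the defining formula of Definition~\ref{def-parindep}, abbreviating $\mm(\Lg{\mm})$, $\mm(\Kg{\mm})$, $\liftm{\mm}(\Rg{\mm})$ as $\hat L_{\mm}$, $\hat K_{\mm}$, $\hat R_{\mm}$, and similarly for $\nm$. Note first that $\hat L_{\mm} = \hat K_{\mm}\gcup(\hat L_{\mm}\gcap\nm(\text{removed part}))$ is not quite what we want; rather the key algebraic fact to extract from Definition~\ref{def-parindep} is that for $\mm\neq\nm$ in $M$, nothing that $\mm$ \emph{removes} (the part of $\hat L_{\mm}$ outside $\hat K_{\mm}\gcup\hat R_{\mm}$, i.e.\ outside $\delgraph{G}{\Vd{\mm}}{\Ad{\mm}}{\ld{\mm}}\gcup\hat R_{\mm}$) can be read back by another $\nm$ either through its left-hand side or through its right-hand side. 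This is exactly the symmetric non-interference condition needed to commute a removal past the parallel application of the others.

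For the direction \emph{parallel independent $\imply$ sequential independent}, fix $N\subseteq M$ and $\mm\in M\setminus N$; since $M$ is parallel independent, so is any subset, in particular $N\cup\set{\mm}$. First I would check $\hat L_{\mm}\subalg\Ipgr{G}{N}$: since $\hat L_{\mm}\subalg G$, it suffices to show $\hat L_{\mm}$ is disjoint from $\Vd{N},\Ad{N},\ld{N}$, which follows from the parallel independence inequalities applied to each $\nm\in N$ (anything in $\hat L_{\mm}$ that $\nm$ deletes would have to lie in $\hat K_{\mm}\gcup\hat R_{\mm}$, contradicting that it is deleted) — together with the fact that the added parts $\RImg{G}{\nm}$ for $\nm\in N$ with $\nm\neq\mm$ only add vertices/arrows of the form $\tuple{x,\nm}$, which are not in $\hat L_{\mm}\subalg G$, and attributes, which are handled by the same inequality. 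This gives the canonical injection $j$. Next I would compute both sides of the required isomorphism. On the vertex/arrow level, $\delgraph{(\Ipgr{G}{N})}{\Vd{j\circ\mm}}{\Ad{j\circ\mm}}{\ld{j\circ\mm}}$ removes from $\Ipgr{G}{N}$ exactly what $\mm$ removes, and since (by parallel independence) $\mm$ removes nothing that any $\nm\in N$ has re-added, this equals $\delgraph{G}{\Vd{N\cup\set\mm}}{\Ad{N\cup\set\mm}}{\ld{N\cup\set\mm}}\gcup\Gcup_{\nm\in N}\RImg{G}{\nm}$ — here one also uses the symmetric fact that no $\nm\in N$ removes anything $\mm$ re-adds, so $\hat R_{\mm}$ survives intact in both orders. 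Then re-adding $\RImg{(\Ipgr{G}{N})}{j\circ\mm}$, which is isomorphic to $\RImg{G}{\mm}$ via the identity on matched items and $\tuple{x,\mm}\mapsto\tuple{x,j\circ\mm}$ on created ones, yields a graph isomorphic to $\Ipgr{G}{N\cup\set\mm}$ by an isomorphism $\am$ fixing $G$. Assembling these equalities of subgraphs and the one relabelling isomorphism gives $\am$; associativity/commutativity/distributivity of $\gcap,\gcup$ on pairwise-joinable graphs (noted after Definition~\ref{def-joinableg}) is what makes the bookkeeping go through.

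For the converse, \emph{sequential independent $\imply$ parallel independent}, I would argue contrapositively and locally: suppose the inequality of Definition~\ref{def-parindep} fails for some distinct $\mm,\nm\in M$, i.e.\ there is an item $a$ with $a\in(\hat L_{\nm}\gcup\hat R_{\nm})\gcap\hat L_{\mm}$ but $a\notin\hat K_{\mm}\gcup\hat R_{\mm}$. Take $N=\set{\nm}$. Since $a\in\hat L_{\mm}\subalg G$ and $a$ is not removed by $\nm$ alone on the vertex/arrow level in the relevant cases — or, if $a$ is an attribute, $a$ either stays or is re-added by $\nm$ so that $a\in\Ipgr{G}{N}$ in any case where $a\in\hat R_{\nm}$, and if $a\in\hat L_{\nm}$ we instead pick the opposite pairing — one shows $a\in\Ipgr{G}{N}$, so $j$ exists, yet $a$ is removed by $\mm$ in the one-step-after computation $\Ipgr{(\Ipgr{G}{N})}{j\circ\mm}$ (because $a\notin\hat K_{\mm}\gcup\hat R_{\mm}$) while in $\Ipgr{G}{N\cup\set\mm}$ the contribution of $\nm$ (through $\hat L_{\nm}$, which equals $\hat K_\nm$ augmented, or through $\hat R_{\nm}=\RImg{G}{\nm}$) keeps $a$ present; hence the two graphs differ on a $G$-item and no $G$-fixing isomorphism can exist. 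Example~\ref{ex-notseqindep} is exactly the instance of this with $a$ the attribute $0$, re-added by $\Rg{\nm}$, which is why the weaker condition there failed. Matching up the two pairings $(\mm,\nm)$ and $(\nm,\mm)$ — deciding which subset $N$ to take — is the delicate point.

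The main obstacle I anticipate is the attribute bookkeeping in the first direction: because attributes (unlike vertices and arrows) can be simultaneously deleted by one rule and re-added by another, the equality of the two candidate graphs must be verified attribute-by-attribute, tracking for each entry $x$ the sets $\attrf{\mm}(\attrfLg{\mm}\setminus\attrfKg{\mm})\circ\invf{\mm}(x)$ and their interaction with $\ld{N}$ and with $\bigcup_{\nm\in N}\attrf{\liftm{\nm}}$; the parallel-independence inequality is precisely what guarantees these sets are disjoint from the ones re-added, but making the set-algebra airtight (rather than hand-wavy) is where the real care is needed.
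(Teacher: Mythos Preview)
The paper does not actually prove Theorem~\ref{thm-para}; it states that ``the (rather long) proof \ldots\ can be found in \cite{BdlTE20c}'', and also notes that Theorem~\ref{thm-indep2edp} (parallel independence implies the effective deletion property) is established there \emph{as a lemma} on the way to Theorem~\ref{thm-para}. So there is no in-paper argument to compare against; I can only assess your plan on its own terms and against that single hint about the structure of the referenced proof.

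Your overall architecture is right, and you correctly locate the hard part in the attribute bookkeeping. Two places need repair, though.

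\textbf{Forward direction, first bullet.} To obtain $\hat L_{\mm}\subalg\Ipgr{G}{N}$ you invoke the inequality in the form ``anything in $\hat L_{\mm}$ that $\nm$ deletes lies in $\hat K_{\mm}\gcup\hat R_{\mm}$'' and claim this ``contradicts that it is deleted''. It does not: membership in $\hat K_{\mm}\gcup\hat R_{\mm}$ says nothing about survival in $\Ipgr{G}{N}$, since $\mm\notin N$. What you need here is the \emph{swapped} inequality $(\hat L_{\mm}\gcup\hat R_{\mm})\gcap\hat L_{\nm}\subalg\hat K_{\nm}\gcup\hat R_{\nm}$. From it, any vertex/arrow $a\in\hat L_{\mm}\cap\hat L_{\nm}$ lies in $\hat K_{\nm}\gcup\hat R_{\nm}$; since $a\in G$ and the underlying graph morphism of $\nm$ is injective, $\hat R_{\nm}\cap G\subseteq\nm(\Kg{\nm})$, hence $a\in\hat K_{\nm}$ and $a\notin\Vd{\nm}\cup\Ad{\nm}$. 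For attributes you will \emph{not} get disjointness from $\ld{N}$: the swapped inequality only yields $a\in\hat R_{\nm}(x)$, i.e.\ the attribute is \emph{re-added} by $\nm$, which is still enough for $a\in\labelf{(\Ipgr{G}{N})}(x)$. So your sentence ``it suffices to show $\hat L_{\mm}$ is disjoint from $\Vd{N},\Ad{N},\ld{N}$'' should be weakened; disjointness from $\ld{N}$ is not available and not needed.

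\textbf{Reverse direction.} Your contrapositive sketch takes $N=\{\nm\}$ and argues about a specific witness $a$, but it silently assumes that the canonical injection $j$ exists, i.e.\ that $\hat L_{\mm}\subalg\Ipgr{G}{\nm}$. That need not hold under a failure of parallel independence; if it fails, sequential independence already fails at the first bullet of Definition~\ref{def-seqindep} and you are done, but you must say so. Moreover, the clause ``if $a\in\hat L_{\nm}$ we instead pick the opposite pairing'' is not yet an argument: you must make the case split explicit (is $a\in\hat R_{\nm}$, ensuring $a\in\Ipgr{G}{\nm}$; or is $a\in\hat L_{\nm}\setminus\hat R_{\nm}$, in which case either the other inequality also fails and you swap roles, or it holds and you must extract a different witness). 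Writing this case analysis out carefully is where the ``rather long'' of the cited proof presumably comes from; your plan identifies the delicate point but does not yet discharge it.

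One structural suggestion aligned with the paper's hint: proving Theorem~\ref{thm-indep2edp} first and using it inside the forward direction streamlines the attribute computations, since effective deletion tells you exactly which labels survive in $\Ipgr{G}{N}$ and in $\Ipgr{G}{N\cup\{\mm\}}$ before you start comparing them.
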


The (rather long) proof of Theorem~\ref{thm-para} can be found in
\cite{BdlTE20c}.

We therefore see that Definition~\ref{def-parindep} arises as a
characterization of sequential independence that does not refer to the
results of the transformations, and indeed that does not rely on the
definition of $\Ipgr{G}{M}$ (Definition~\ref{def-parallel-rew}),
though of course it does rely on the definitions of unions of graphs,
of rules and of the matchings $\liftm{\mm}$
(Definitions~\ref{def-joinableg}, \ref{def-rules} and \ref{def-RImg}).
Note also that Definition~\ref{def-parindep} depends explicitly on the
right-hand sides of rules, in contrast with the general algebraic
definition of parallel independence given above, or with the Essential
Condition of parallel independence in \cite{CorradiniDLRMCA18}.

\section{Parallel Rewriting}\label{sec-EDP}

We have not yet defined a relation of parallel rewriting as we did for
sequential rewriting (Definition~\ref{def-seqrew}). The reason is that
two matchings may conflict as one retains (in $R\gcap K$) what another
removes. 

\begin{example}\label{ex-conflict}
  We consider the following unlabeled rule $r$ and graph $G$.
    \[r\ =\ (\,\raisebox{-3.9ex}{\begin{tikzpicture}
    \node[draw,rounded corners] at (0,0)
    {\begin{tikzpicture}[scale=2.2]
        \node (x) at (0,0) {$x$};
        \node (y) at (1,0) {$x'$};
        \path[->,bend left] (x) edge
        node[fill=white,font=\footnotesize] {$f$} (y);
        \path[->,bend left] (y) edge node[fill=white,font=\footnotesize] {$f'$} (x);
      \end{tikzpicture}};
  \end{tikzpicture}},\, 
  \raisebox{-0.7ex}{\begin{tikzpicture}
    \node[draw,rounded corners] at (0,0)
    {$x$};
  \end{tikzpicture}},\,\raisebox{-0.7ex}{\begin{tikzpicture}
    \node[draw,rounded corners] at (0,0)
    {$x$};
  \end{tikzpicture}}\,)\hspace{2em}
    G\ =\ \raisebox{-3.9ex}{\begin{tikzpicture}
    \node[draw,rounded corners] at (0,0)
    {\begin{tikzpicture}[scale=2.2]
        \node (x) at (0,0) {$y$};
        \node (y) at (1,0) {$z$};
        \path[->,bend left] (x) edge
        node[fill=white,font=\footnotesize] {$g$} (y);
        \path[->,bend left] (y) edge node[fill=white,font=\footnotesize] {$h$} (x);
      \end{tikzpicture}};
  \end{tikzpicture}} \]

  There are two matchings $\mm_1,\mm_2$ of $r$ in
  $G$, given by 
  \[\begin{array}{c|cccc}
      &x&x'&f&f'\\ \hline
      \mm_1&y&z&g&h
    \end{array}\hspace{2em} 
    \begin{array}{c|cccccc}
      &x&x'&f&f'\\ \hline
      \mm_2&z&y&h&g
    \end{array}\]

    According to rule $r$ with matching $\mm_1$, the node
    $\mm_1(x')=z$ and the arrows $\mm_1(f)=g$ and $\mm_1(f')=h$ have
    to be removed, and the node $\mm_1(x)=y$ should occur in the
    result of the transformation. But with matching $\mm_2$, the node $\mm_2(x')=y$ should
    be removed and the node $\mm_2(x)=z$ should be preserved. There is
    a conflict between $\mm_1$ and $\mm_2$ on the nodes of $G$ (but
    not on its arrows).

    Let $M=\set{\mm_1,\mm_2}$, then $\Vd{M}= \mm_1(\set{x'})\cup\mm_2(\set{x'}) =
    \set{y,z}=\nodes{G}$ hence $\delgraph{G}{\Vd{M}}{\Ad{M}}{\ld{M}}$
    is empty and
    \[\Ipgr{G}{M} = \mm_1(\raisebox{-0.7ex}{\begin{tikzpicture}
        \node[draw,rounded corners] at (0,0) {$x$};
  \end{tikzpicture}}) \gcup \mm_2(\raisebox{-0.7ex}{\begin{tikzpicture}
    \node[draw,rounded corners] at (0,0)
    {$x$};
  \end{tikzpicture}}) = \raisebox{-0.9ex}{\begin{tikzpicture}
    \node[draw,rounded corners] at (0,0)
    {$y$};
  \end{tikzpicture}} \gcup \raisebox{-0.7ex}{\begin{tikzpicture}
    \node[draw,rounded corners] at (0,0)
    {$z$};
  \end{tikzpicture}} = \raisebox{-1.8ex}{\begin{tikzpicture}
    \node[draw,rounded corners] at (0,0)
    {\begin{tikzpicture}[scale=1]
        \node (x) at (0,0) {$y$};
        \node (y) at (1,0) {$z$};
              \end{tikzpicture}};
  \end{tikzpicture}}\]

\end{example}

The transformation offered by Definition~\ref{def-parallel-rew}
performs deletions before unions, which means that these conflicts are
resolved by giving priority to retainers over removers. But if the
deletion actions of a rule are not executed in a parallel
transformation, how can we claim that this rule has been executed (or
applied) in parallel with others? Thus, in order to define parallel
rewriting with a clear semantics we need to rule out such conflicts.

A natural restriction is therefore to make sure that the items that
should be removed, i.e., those contained in $\Vd{M}$, $\Ad{M}$ or
$\ld{M}$, have indeed been removed from the result.

\begin{definition}[regularity]\label{def-regular}
  For any graph $G$ and set $M \subseteq\Matches{\R}{G}$, we say that
  $M$ is \emph{regular} if $\Ipgr{G}{M}$ is disjoint from
  $\Vd{M}, \Ad{M}, \ld{M}$.
\end{definition}

As for sequential independence, this property of $M$ can be
characterized as a property of pairs of elements of $M$.

\begin{lemma}\label{lm-regular}
  For any graph $G$ and set $M \subseteq\Matches{\R}{G}$, \[M \text{ is
  regular\ \  iff\ \ }
  \liftm{\nm}(\Rg{\nm}) \gcap \mm(\Lg{\mm})\subalg \mm(\Kg{\mm})\text{ for
  all }\mm,\nm\in M.\]
\end{lemma}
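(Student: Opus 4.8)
The statement to prove is Lemma~\ref{lm-regular}: that $M$ is regular (i.e.\ $\Ipgr{G}{M}$ is disjoint from $\Vd{M},\Ad{M},\ld{M}$) iff $\liftm{\nm}(\Rg{\nm})\gcap\mm(\Lg{\mm})\subalg\mm(\Kg{\mm})$ for all $\mm,\nm\in M$. My plan is to unfold the definition of $\Ipgr{G}{M}$ as $\delgraph{G}{\Vd{M}}{\Ad{M}}{\ld{M}}\gcup\Gcup_{\mm\in M}\RImg{G}{\mm}$ and to observe that, since $\delgraph{G}{\Vd{M}}{\Ad{M}}{\ld{M}}$ is by construction already disjoint from $\Vd{M},\Ad{M},\ld{M}$, the graph $\Ipgr{G}{M}$ is disjoint from these iff each summand $\RImg{G}{\mm}=\liftm{\mm}(\Rg{\mm})$ is. Here I would use the elementary facts that a union $\Gcup$ of graphs is disjoint from $V,A,l$ iff every member is, and that $\subalg$-smaller graphs inherit disjointness. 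So regularity of $M$ is equivalent to: for every $\mm\in M$, $\liftm{\mm}(\Rg{\mm})$ is disjoint from $\Vd{M},\Ad{M},\ld{M}$.

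Next I would decompose disjointness from the \emph{unions} $\Vd{M}=\bigcup_{\nm\in M}\nm(\nodesLg{\nm}\sgminus\nodesKg{\nm})$, $\Ad{M}$ and $\ld{M}$: a graph is disjoint from a union of sets/attributions iff it is disjoint from each. Hence regularity becomes: for all $\mm,\nm\in M$, $\liftm{\mm}(\Rg{\mm})$ is disjoint from $\nm(\nodesLg{\nm}\sgminus\nodesKg{\nm})$, $\nm(\arrowsLg{\nm}\sgminus\arrowsKg{\nm})$ and $\attrf{\nm}\circ(\attrfLg{\nm}\sgminus\attrfKg{\nm})\circ\invf{\nm}$. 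Swapping the roles of $\mm$ and $\nm$ (the indices are bound by the same quantifier $\forall\mm,\nm\in M$), this is: for all $\mm,\nm\in M$, $\liftm{\nm}(\Rg{\nm})$ is disjoint from the components of $\mm(\Lg{\mm})$ that are ``in $L$ but not in $K$''. The task is then to show this is exactly $\liftm{\nm}(\Rg{\nm})\gcap\mm(\Lg{\mm})\subalg\mm(\Kg{\mm})$.

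For that last equivalence I would argue at the level of the carrier. Because $\mm$ is a matching, its underlying graph part is injective, so $\mm(\Lg{\mm})$ has vertex set $\mm(\nodesLg{\mm})$, arrow set $\mm(\arrowsLg{\mm})$, attribution $\attrf{\mm}\circ\attrfLg{\mm}\circ\invf{\mm}$, and likewise $\mm(\Kg{\mm})$ with $\Kg{\mm}$ in place of $\Lg{\mm}$; and since $\Kg{\mm}\subalg\Lg{\mm}$, we have $\mm(\Kg{\mm})\subalg\mm(\Lg{\mm})$ with $\mm(\nodesKg{\mm})=\mm(\nodesLg{\mm})\sgminus\mm(\nodesLg{\mm}\sgminus\nodesKg{\mm})$ and similarly for arrows (using injectivity) and with a set-difference identity for the attributions. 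Then $\liftm{\nm}(\Rg{\nm})\gcap\mm(\Lg{\mm})\subalg\mm(\Kg{\mm})$ unfolds, componentwise, precisely to: no vertex of $\liftm{\nm}(\Rg{\nm})$ lies in $\mm(\nodesLg{\mm})\sgminus\mm(\nodesKg{\mm})=\mm(\nodesLg{\mm}\sgminus\nodesKg{\mm})$, no arrow lies in the corresponding arrow difference, and no attribute of a common vertex/arrow lies in $\ld{\mm}$ — i.e.\ exactly the disjointness of $\liftm{\nm}(\Rg{\nm})$ from $\Vd{\mm},\Ad{\mm},\ld{\mm}$. Quantifying over all pairs yields the lemma.

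The main obstacle, and the only place real care is needed, is the attribute component: one must verify, using consistency of the matching $\mm$ (Definition~\ref{def-rules}), that $\attrf{\mm}$ commutes with set difference in the sense that $\attrf{\mm}(\attrfLg{\mm}(x)\sgminus\attrfKg{\mm}(x))=\attrf{\mm}(\attrfLg{\mm}(x))\sgminus\attrf{\mm}(\attrfKg{\mm}(x))$, so that the attribution of $\mm(\Kg{\mm})$ is genuinely the complement within $\mm(\Lg{\mm})$ of $\ld{\mm}$ at each graph item, and to handle the bookkeeping of the implicit extension of attributions of subgraphs to the whole graph (the convention fixed just before Definition~\ref{def-joinablef}). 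The vertex and arrow components are routine given injectivity of matchings; everything else is unwinding definitions.
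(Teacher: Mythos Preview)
Your proposal is correct and follows essentially the same route as the paper's proof: reduce regularity to disjointness of $H=\Gcup_{\nm\in M}\liftm{\nm}(\Rg{\nm})$ from $\Vd{M},\Ad{M},\ld{M}$, split this into pairwise conditions over $\mm,\nm\in M$, and identify each with the subgraph inclusion using injectivity of matchings on vertices and arrows and consistency of $\mm$ on attributes. The only cosmetic difference is that the paper computes the intersections $\nodes{H}\cap\Vd{M}$ etc.\ directly as double unions, whereas you first factor the graph union and then the set unions; the substance is the same.
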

\begin{proof}
  Let $H = \Gcup_{\nm\in M}\RImg{G}{\nm}$, then $\Ipgr{G}{M} =
  \delgraph{G}{\Vd{M}}{\Ad{M}}{\ld{M}} \gcup H$ is disjoint from
  $\Vd{M}, \Ad{M}, \ld{M}$ iff $H$ is. We have
\[\nodes{H}\cap \Vd{M} = \Big(\bigcup_{\nm\in M}\RImg{\nodes{G}}{\nm}\Big) \cap
  \Big(\bigcup_{\mm\in M}\mm(\nodesLg{\mm}\setminus
  \nodesKg{\mm})\Big) = \bigcup_{\mm,\nm\in M}\liftm{\nm}(\nodesRg{\nm})
  \cap \mm(\nodesLg{\mm})\setminus \mm(\nodesKg{\mm})\] hence
$\nodes{H}\cap \Vd{M} = \ensvide$ iff
$\liftm{\nm}(\nodesRg{\nm}) \cap \mm(\nodesLg{\mm})\setminus \mm(\nodesKg{\mm}) =
\ensvide$ for all $\mm,\nm\in M$, but this is equivalent to
$\liftm{\nm}(\nodesRg{\nm}) \cap \mm(\nodesLg{\mm})\subseteq
\mm(\nodesKg{\mm})$. Similarly we see that
$\arrows{H}\cap \Ad{M} = \ensvide$ iff
$\liftm{\nm}(\arrowsRg{\nm}) \cap \mm(\arrowsLg{\mm})\subseteq
\mm(\arrowsKg{\mm})$ for all $\mm,\nm\in M$. 

For every vertex or arrow $x$ of $\Ipgr{G}{M}$ we have
\begin{align*}
  \labelf{H}(x)\cap \ld{M}(x)
  &= \bigcup_{\nm\in M} \labelf{\nm}\circ \labelfRg{\nm} \circ
    \invf{\nm}(x) \cap \ld{M}(x)\\
  &= \bigcup_{\mm,\nm\in M} \labelf{\nm}\circ \labelfRg{\nm} \circ
    \invf{\nm}(x) \cap \labelf{\mm}\circ (\labelfLg{\mm}\setminus \labelfKg{\mm})\circ \invf{\mm}(x)\\
  &= \bigcup_{\mm\nm\in M} \labelf{\nm}\circ \labelfRg{\nm} \circ
    \invf{\nm}(x) \cap \labelf{\mm}\circ \labelfLg{\mm}\circ
    \invf{\mm}(x) \setminus \labelf{\mm}\circ \labelfKg{\mm}\circ \invf{\mm}(x)
\end{align*}
by using the fact that $\mm$ is consistent. We therefore see that
$\labelf{H}(x)\cap \ld{M}(x)=\ensvide$ holds iff
$\labelf{\nm}\circ \labelfRg{\nm} \circ \invf{\nm}(x)
\cap \labelf{\mm}\circ \labelfLg{\mm}\circ \invf{\mm}(x)
\subseteq \labelf{\mm}\circ \labelfKg{\mm}\circ \invf{\mm}(x)$ holds
for all $\mm,\nm\in M$. By definition $M$ is regular iff
$\nodes{H}\cap \Vd{M} =\arrows{H}\cap \Ad{M} = \ensvide$ and
$\labelf{H}\cap \ld{M}$ is empty everywhere, hence $M$ is regular iff
$\liftm{\nm}(\Rg{\nm}) \gcap \mm(\Lg{\mm})\subalg \mm(\Kg{\mm})$ for
all $\mm,\nm\in M$.
\end{proof}
\begin{corollary}\label{cr-regular}
  $M$ is regular iff all its subsets are regular.
\end{corollary}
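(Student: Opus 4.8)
The plan is to derive the corollary directly from the pairwise characterization of regularity given by Lemma~\ref{lm-regular}. Recall that Lemma~\ref{lm-regular} states that $M$ is regular iff $\liftm{\nm}(\Rg{\nm}) \gcap \mm(\Lg{\mm})\subalg \mm(\Kg{\mm})$ for all $\mm,\nm\in M$. The point is that this is a condition quantified over \emph{pairs} of elements of $M$, and such ``universal pair'' conditions are automatically inherited by subsets: if $N\subseteq M$, then every pair $\mm,\nm\in N$ is also a pair in $M$, so the condition holds for $N$ whenever it holds for $M$. This gives the forward direction.

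For the converse, the observation is that singletons are a special case. If every subset of $M$ is regular, then in particular for any $\mm,\nm\in M$ the two-element subset $N=\set{\mm,\nm}$ (or the singleton $\set{\mm}$ when $\mm=\nm$) is regular, so by Lemma~\ref{lm-regular} applied to $N$ we get $\liftm{\nm}(\Rg{\nm}) \gcap \mm(\Lg{\mm})\subalg \mm(\Kg{\mm})$. Since $\mm$ and $\nm$ were arbitrary in $M$, Lemma~\ref{lm-regular} applied to $M$ itself yields that $M$ is regular. Thus I would write: $M$ is regular $\Leftrightarrow$ the pairwise condition holds on all pairs in $M$ $\Leftrightarrow$ the pairwise condition holds on all pairs in every $N\subseteq M$ $\Leftrightarrow$ every $N\subseteq M$ is regular.

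There is essentially no obstacle here — the corollary is a purely logical consequence of the fact that Lemma~\ref{lm-regular} reduces a set-level property to a property of pairs. The only point requiring a moment's care is that the pairwise condition in Lemma~\ref{lm-regular} ranges over \emph{all} pairs $\mm,\nm\in M$, including the diagonal $\mm=\nm$; this is exactly what makes singletons (and hence the empty set, vacuously) regular and makes the reduction to two-element and one-element subsets legitimate in the converse direction.
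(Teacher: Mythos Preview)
Your proof is correct and follows exactly the approach the paper intends: the corollary is stated without proof immediately after Lemma~\ref{lm-regular}, as a direct consequence of the pairwise characterization. One minor simplification: for the converse you need only observe that $M$ is a subset of itself, so if all subsets of $M$ are regular then in particular $M$ is regular; the detour through two-element subsets is unnecessary (though not wrong).
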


These nice properties, and the fact that regularity ensures the
absence of conflicts, are however not sufficient in the light of
parallel independence. Indeed, we now show that a parallel independent
set may not be regular.

\begin{example}\label{ex-EDP}
  Let us consider rules $r_1=\tuple{L_1,K_1,R_1}$ and
  $r_2=\tuple{L_2,K_2,R_2}$ where the graphs $L_1$, $K_1$ and $R_1$
  have only one vertex $x_1$, the graphs $L_2$, $K_2$ and $R_2$ have
  only one vertex $x_2$, and the attributes are as pictured below ($u,v$
  are variables and $f$ is a unary function symbol). Let $\algebrf{G}$ be the
  algebra with carrier set $\set{0}$ where $f$ is interpreted as
  the constant function $0$, and let $G$ be the graph that has
  a unique vertex $x$ with attribute $0$.
  \begin{center}
    \begin{tikzpicture}[scale=1.1]
      \draw (0,1) circle (.5cm and .25cm);
      \draw (0.5,1) circle (1cm and .35cm);
      \draw (0.5,-1) circle (1cm and .35cm);
      \draw (-0.5,-1) circle (1cm and .35cm);
      \draw (0,-1) circle (.5cm and .25cm);
      \draw (0,0) circle (0.7cm and .33cm);
      \node (G) at (0,0) {$0$}; \node (U) at (0,1) {$u$};
      \node (FU) at (1,1) {$f(u)$}; \node (V) at (-1,-1) {$v$};
      \node (FV) at (1,-1) {$f(v)$};
      \node at (-2.2,1) {$\set{u}=\labelf{L}_1(x_1)=\labelf{K}_1(x_1)$};
      \node at (2.9,1) {$\labelf{R}_1(x_1) = \set{u,f(u)}$};
      \node at (1.6,0) {$\labelf{G}(x)=\set{0}$};
      \node at (-2.5,-1) {$\set{v}=\labelf{L}_2(x_2)$};
      \node at (0,-1.6) {$\labelf{K}_2(x_2)=\ensvide$};
      \node at (2.8,-1) {$\labelf{R}_2(x_2)=\set{f(v)}$};
      \path[->] (U) edge node[left, font=\footnotesize] {$\labelf{\mm}_1$} (G);
      \path[->] (FU) edge node[right, font=\footnotesize] {$\labelf{\mm}_1$} (G);
      \path[->] (V) edge node[left, font=\footnotesize] {$\labelf{\mm}_2$} (G);
      \path[->] (FV) edge node[right, font=\footnotesize] {$\labelf{\mm}_2$} (G);
    \end{tikzpicture}
  \end{center}

  There are exactly two matchings of $\set{r_1,r_2}$ in $G$: $\mm_1$
  and $\mm_2$ defined by $\mm_1(x_1)=\mm_2(x_2)=x$ and
  $\labelf{\mm}_1(u)=\labelf{\mm}_2(v)=0$. Let $M=\set{\mm_1,\mm_2}$,
  we see by Lemma~\ref{lm-regular} that $M$ is not regular since
  \[\liftm{\mm_1}(R_1)\gcap\mm_2(L_2) = \mm_1(\,\raisebox{-1.3ex}{\begin{tikzpicture}
    \node[draw,rounded corners] at (0,0)
    {$x_1\sepattr u,f(u)$};
  \end{tikzpicture}}\,) \gcup \mm_2(\,\raisebox{-1.3ex}{\begin{tikzpicture}
    \node[draw,rounded corners] at (0,0)
    {$x_2\sepattr v$};
  \end{tikzpicture}}\,) = \raisebox{-1.2ex}{\begin{tikzpicture}
    \node[draw,rounded corners] at (0,0)
    {$x\sepattr 0$};
  \end{tikzpicture}} = G\] is not a subgraph of
  $\mm_2(K_2) = \mm_2(\raisebox{-0.9ex}{\begin{tikzpicture}
    \node[draw,rounded corners] at (0,0)
    {$x_2$};
  \end{tikzpicture}}) = \raisebox{-0.7ex}{\begin{tikzpicture}
    \node[draw,rounded corners] at (0,0)
    {$x$};
  \end{tikzpicture}}$ (or equivalently because $\Ipgr{G}{M}=G$ is
not disjoint from $\ld{M}$). 

  However, we see that $M$ is sequential independent since the
  matchings can be applied sequentially in any order, yielding in both
  cases the graph $G$. Equivalently, $M$ is parallel independent since
  \begin{align*}
    (\mm_2(L_2)\gcup \liftm{\mm_2}(R_2))\gcap \mm_1(L_1) = 
    (G\gcup G)\gcap G = G &\subalg G\gcup G = \mm_1(K_1)\gcup \liftm{\mm_1}(R_1)\\
    (\mm_1(L_1)\gcup \liftm{\mm_1}(R_1))\gcap \mm_2(L_2) = 
    (G\gcup G)\gcap G = G &\subalg \raisebox{-0.7ex}{\begin{tikzpicture}
    \node[draw,rounded corners] at (0,0)
    {$x$};
  \end{tikzpicture}} \gcup G = \mm_2(K_2)\gcup \liftm{\mm_2}(R_2).
  \end{align*}
\end{example}

Note that conversely a set may be regular and not parallel
independent, as is the case of the set $M$ in Example~\ref{ex-Ipgr}.

We obviously need a more comprehensive notion of parallel rewriting,
one that applies \emph{at least} on all parallel independent sets of
matchings. We see in Example~\ref{ex-EDP} that the two rules do clash
on the attribute 0 of $x$, but the clash is settled by their
right-hand sides. This leads to the following definition from
\cite{BdlTE20c}.

\begin{definition}[effective deletion property, parallel rewriting]\label{def-EDP}
  For any graph $G$, a set $M \subseteq\Matches{\R}{G}$ is said to
  satisfy the \emph{effective deletion property} if $\Ipgr{G}{M}$
  is disjoint from $\Vd{M}, \Ad{M}, \ld{M}\setminus \llift{M}$, where
  \[\llift{M}\defeq \bigcup_{\mm\in
    M}\attrf{\mm}\circ(\attrfRg{\mm}\setminus
  \attrfKg{\mm})\circ\invf{\mm}.\]

For any finite set of rules $\R$, we define the relation $\IPTrew{\R}$
of \emph{parallel rewriting} by stating that, for all graphs $G$ and
$H$, $G\IPTrew{\R} H$ iff there exists a set
$M\subseteq\Matches{\R}{G}$ that has the effective deletion property
and such that $H\isog \Ipgr{G}{M}$. We write $G\fullPRr{\R} H$ if
$M=\Matches{\R}{G}$.
\end{definition}

The effective deletion property is obviously more general than
regularity. The example below shows that it is strictly more general
than regularity.

\begin{example}\label{ex-reg-parindep-to-edp}
  We consider again Example~\ref{ex-notseqindep} where
  $M=\set{\mm,\nm}$ is not sequential independent, hence by
  Theorem~\ref{thm-para} $M$ is not parallel independent. We have
  $\Vd{M}=\Ad{M}=\emptyset$ and $\ld{M}(x)=\set{0}$. Since
  $\Ipgr{G}{M}=\raisebox{-1.3ex}{\begin{tikzpicture}
      \node[draw,rounded corners] at (0,0)
      {$x\sepattr 0$};
    \end{tikzpicture}}$ is not disjoint from $\Vd{M},\Ad{M},\ld{M}$ then
  $M$ is not regular. But
  \[\llift{M}(x) = \attrf{\mm}\circ(\attrf{R}_1\setminus
    \attrf{K}_1)\circ\invf{\mm}(x)\ \cup\ 
    \attrf{\nm}\circ(\attrf{R_2}\setminus
    \attrf{K_2})\circ\invf{\nm}(x) = \set{0},\] hence
  $\ld{M}(x)\setminus \llift{M}(x)=\ensvide$ and therefore $M$ has the
  effective deletion property.
\end{example}

It has been shown in \cite{BdlTE20c} that $\fullPRr{\R}$ is
deterministic up to isomorphism, that is, if $G\fullPRr{\R} H$,
$G'\fullPRr{\R} H'$ and $G\isog G'$ then $H\isog H'$. In particular,
it is possible to represent any cellular automaton by a suitable rule
$r$ and a class of graphs that correspond to configurations of the
automaton (every vertex corresponds to a cell), such that
$\fullPRr{r}$ (restricted to such graphs) is the transition function
of the automaton. Furthermore, it is proved in \cite{BdlTE20c} (as a
lemma to Theorem~\ref{thm-para}) that

\begin{theorem}\label{thm-indep2edp}
  For any graph $G$ and set $M\subseteq\Matches{\R}{G}$ if $M$ is
  parallel independent then $M$ has the effective
  deletion property.
\end{theorem}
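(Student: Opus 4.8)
The goal is to derive the effective deletion property of a parallel independent set $M$ from the characterizing inequality of Definition~\ref{def-parindep}. The natural route is to follow the same bookkeeping as the proof of Lemma~\ref{lm-regular}: reduce the disjointness statement $\Ipgr{G}{M}$ disjoint from $\Vd{M},\Ad{M},\ld{M}\setminus\llift{M}$ to disjointness of $H\defeq \Gcup_{\nm\in M}\RImg{G}{\nm}$ from those same data, and then break the latter into pairwise conditions indexed by $\mm,\nm\in M$. First I would handle the vertex and arrow components: for each $\nm$, $\liftm{\nm}(\nodesRg{\nm})\cap\mm(\nodesLg{\mm})\setminus\mm(\nodesKg{\mm})$ must be empty, and similarly for arrows. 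These are exactly the vertex/arrow parts of the regularity condition $\liftm{\nm}(\Rg{\nm})\gcap\mm(\Lg{\mm})\subalg\mm(\Kg{\mm})$, and since vertices and arrows cannot be recovered in the right-hand side, I must extract them from parallel independence. For $\mm\neq\nm$ this follows from Definition~\ref{def-parindep} by projecting onto vertices/arrows: $\liftm{\nm}(\nodesRg{\nm})\subseteq\nm(\nodesLg{\nm})\cup\liftm{\nm}(\nodesRg{\nm})$, and on the right-hand side the vertices of $\mm(\Kg{\mm})\gcup\liftm{\mm}(\Rg{\mm})$ that can meet $\mm(\nodesLg{\mm})$ are only those in $\mm(\nodesKg{\mm})$, because a vertex of $\liftm{\mm}(\Rg{\mm})$ lying in $\mm(\nodesLg{\mm})$ must already be in $\mm(\nodesRg{\mm}\cap\nodesKg{\mm})\subseteq\mm(\nodesKg{\mm})$ by construction of $\RImg{G}{\mm}$. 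For $\mm=\nm$ the condition $\liftm{\mm}(\nodesRg{\mm})\cap\mm(\nodesLg{\mm})\subseteq\mm(\nodesKg{\mm})$ holds trivially by the same construction argument, so no extra hypothesis is needed there.

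**The attribute component.** The substantive part is the attribute condition: for every vertex or arrow $x$ of $\Ipgr{G}{M}$, $\attrf{H}(x)\cap(\ld{M}(x)\setminus\llift{M}(x))=\ensvide$. Expanding as in Lemma~\ref{lm-regular}, $\attrf{H}(x)=\bigcup_{\nm\in M}\attrf{\nm}\circ\attrfRg{\nm}\circ\invf{\nm}(x)$, while $\ld{M}(x)=\bigcup_{\mm\in M}\attrf{\mm}\circ(\attrfLg{\mm}\setminus\attrfKg{\mm})\circ\invf{\mm}(x)$ and $\llift{M}(x)=\bigcup_{\mm\in M}\attrf{\mm}\circ(\attrfRg{\mm}\setminus\attrfKg{\mm})\circ\invf{\mm}(x)$. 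So I need: for all $\mm,\nm\in M$ and all $x$,
\[
\attrf{\nm}(\attrfRg{\nm}(\invf{\nm}x))\ \cap\ \attrf{\mm}(\attrfLg{\mm}(\invf{\mm}x))\setminus\attrf{\mm}(\attrfKg{\mm}(\invf{\mm}x))\ \subseteq\ \llift{M}(x).
\]
When $\mm=\nm$, consistency of $\mm$ gives $\attrf{\mm}(\attrfRg{\mm}(\invf{\mm}x))\cap\attrf{\mm}(\attrfLg{\mm}(\invf{\mm}x))\setminus\attrf{\mm}(\attrfKg{\mm}(\invf{\mm}x))$; an attribute here lies in $\attrf{\mm}(\attrfRg{\mm}\cap\attrfLg{\mm})=\attrf{\mm}(\attrfRg{\mm}\gcap\attrfLg{\mm})$, and since $\Rg{\mm}\gcap\Lg{\mm}=L\gcap R\subalg\Kg{\mm}$ (Remark~\ref{rem-rules}), it actually lies in $\attrf{\mm}(\attrfKg{\mm})$, contradicting the set-minus — so the $\mm=\nm$ contribution is empty and no hypothesis is needed. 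When $\mm\neq\nm$, I invoke parallel independence: an attribute $t$ in the left side of the displayed inclusion lies in $\attrf{\nm}(\attrfRg{\nm}(\invf{\nm}x))\subseteq(\nm(\Lg{\nm})\gcup\liftm{\nm}(\Rg{\nm}))$ restricted to $x$, and also in $\mm(\Lg{\mm})$ restricted to $x$; hence by Definition~\ref{def-parindep} it lies in $(\mm(\Kg{\mm})\gcup\liftm{\mm}(\Rg{\mm}))$ at $x$. Since $t$ is by assumption \emph{not} in $\attrf{\mm}(\attrfKg{\mm}(\invf{\mm}x))$, it must come from $\liftm{\mm}(\Rg{\mm})$, i.e.\ from $\attrf{\mm}(\attrfRg{\mm}(\invf{\mm}x))$ but not from $\attrf{\mm}(\attrfKg{\mm}(\invf{\mm}x))$; that is exactly $t\in\attrf{\mm}\circ(\attrfRg{\mm}\setminus\attrfKg{\mm})\circ\invf{\mm}(x)\subseteq\llift{M}(x)$, as required.

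**Assembling.** Finally I would combine the three families of pairwise conditions: $\nodes{H}\cap\Vd{M}=\ensvide$, $\arrows{H}\cap\Ad{M}=\ensvide$ and $\attrf{H}\cap(\ld{M}\setminus\llift{M})$ empty everywhere, which together say $H$ — and therefore $\Ipgr{G}{M}=\delgraph{G}{\Vd{M}}{\Ad{M}}{\ld{M}}\gcup H$, since $\delgraph{G}{\Vd{M}}{\Ad{M}}{\ld{M}}$ is disjoint from $\Vd{M},\Ad{M},\ld{M}\supseteq\ld{M}\setminus\llift{M}$ by construction — is disjoint from $\Vd{M},\Ad{M},\ld{M}\setminus\llift{M}$, which is the effective deletion property. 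The main obstacle is the attribute case: one has to be careful that the term $t$ realizing membership on the left is tracked through the right-hand side of Definition~\ref{def-parindep} as an actual element (not merely a subgraph inclusion) and that the careful use of consistency of $\mm$ and of $\Rg{\mm}\gcap\Lg{\mm}\subalg\Kg{\mm}$ disposes of the diagonal case $\mm=\nm$ cleanly; the vertex/arrow cases are routine once one observes that a matched object in $\liftm{\mm}(\Rg{\mm})\cap\mm(\nodesLg{\mm})$ is forced into $\mm(\nodesKg{\mm})$ by the indexing construction of $\RImg{G}{\mm}$.
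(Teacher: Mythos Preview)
The paper does not actually prove this theorem here; it only records that the result is established in \cite{BdlTE20c} as a lemma towards Theorem~\ref{thm-para}. So there is no in-paper argument to compare yours against. Your overall strategy---reducing disjointness of $\Ipgr{G}{M}$ to disjointness of $H=\Gcup_{\nm\in M}\RImg{G}{\nm}$ and then to pairwise conditions, exactly as in the proof of Lemma~\ref{lm-regular}---is the natural one and your vertex/arrow treatment and your $\mm\neq\nm$ attribute treatment are correct.

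There is, however, a genuine slip in the diagonal attribute case $\mm=\nm$. You assert that an attribute $t$ lying in $\attrf{\mm}(\attrfRg{\mm}(\invf{\mm}x))\cap\attrf{\mm}(\attrfLg{\mm}(\invf{\mm}x))$ must lie in $\attrf{\mm}(\attrfRg{\mm}\cap\attrfLg{\mm})$. This inference is invalid: the $\Sigma$-homomorphism $\attrf{\mm}$ need not be injective, so $\attrf{\mm}(A)\cap\attrf{\mm}(B)$ can strictly contain $\attrf{\mm}(A\cap B)$. (Example~\ref{ex-EDP} already exhibits such collapsing: $u$ and $f(u)$ are distinct terms with the same image $0$.) Hence your conclusion that ``the $\mm=\nm$ contribution is empty'' does not follow.

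The repair is immediate and is precisely the argument you use in the $\mm\neq\nm$ case. From $t\in\attrf{\mm}(\attrfRg{\mm}(\invf{\mm}x))$ pick $r\in\attrfRg{\mm}(\invf{\mm}x)$ with $\attrf{\mm}(r)=t$; since $t\notin\attrf{\mm}(\attrfKg{\mm}(\invf{\mm}x))$ we must have $r\notin\attrfKg{\mm}(\invf{\mm}x)$, whence $t\in\attrf{\mm}\bigl((\attrfRg{\mm}\setminus\attrfKg{\mm})(\invf{\mm}x)\bigr)\subseteq\llift{M}(x)$. So the diagonal contribution is not empty in general but is contained in $\llift{M}(x)$, which is all that is required. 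With this correction your three cases assemble exactly as you describe and the proof goes through.
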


Hence effective deletion supports a definition of parallel rewriting
that is general enough to handle parallel independence. Besides,
Example~\ref{ex-reg-parindep-to-edp} also shows that the effective
deletion property is strictly more general than parallel independence.

We also see that
\begin{corollary}
  If $M\subseteq \Matches{\R}{G}$ is finite and parallel independent then $G
  \seqrew{\R} \Ipgr{G}{M}$ and $G\IPTrew{\R} \Ipgr{G}{M}$.
\end{corollary}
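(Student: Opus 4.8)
The plan is to derive this corollary directly from the three named results already established, treating it as a simple combination rather than a fresh argument. The statement has two conclusions, $G \seqrew{\R} \Ipgr{G}{M}$ and $G \IPTrew{\R} \Ipgr{G}{M}$, and each follows from a separate ingredient once we know that parallel independence implies both sequential independence and the effective deletion property.

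First I would invoke Theorem~\ref{thm-para}: since $M$ is parallel independent, it is sequential independent. Because $M$ is also assumed finite, Proposition~\ref{prop-seqindep} then applies verbatim and yields $G \seqrew{\R} \Ipgr{G}{M}$. That disposes of the first conclusion with no further work.

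For the second conclusion I would invoke Theorem~\ref{thm-indep2edp}: parallel independence of $M$ implies that $M$ has the effective deletion property. Then, unwinding Definition~\ref{def-EDP} of the parallel rewriting relation $\IPTrew{\R}$, the existence of a set $M \subseteq \Matches{\R}{G}$ with the effective deletion property and with $\Ipgr{G}{M} \isog \Ipgr{G}{M}$ (trivially, via the identity) is exactly what is needed to conclude $G \IPTrew{\R} \Ipgr{G}{M}$. Note that finiteness of $M$ is not even needed for this half; it is only required for the sequential-rewriting half, where we genuinely need to build a finite chain of single steps.

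There is essentially no obstacle here: the corollary is a bookkeeping consequence of Theorem~\ref{thm-para}, Proposition~\ref{prop-seqindep} and Theorem~\ref{thm-indep2edp}, all stated earlier. The only point worth a sentence of care is that $\IPTrew{\R}$ is defined up to isomorphism while $\seqr{\R}$ (hence $\seqrew{\R}$) is too, so writing the target as $\Ipgr{G}{M}$ itself (rather than some isomorphic copy) is legitimate in both cases. If anything is "hard", it is only making sure the finiteness hypothesis is attached to the right half of the statement; the parallel-rewriting half holds without it.
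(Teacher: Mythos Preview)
Your proposal is correct and matches the paper's own proof essentially verbatim: the paper also derives $G\IPTrew{\R}\Ipgr{G}{M}$ from Theorem~\ref{thm-indep2edp} and $G\seqrew{\R}\Ipgr{G}{M}$ from Theorem~\ref{thm-para} followed by Proposition~\ref{prop-seqindep}. Your extra remarks on where finiteness is actually used and on the isomorphism clause in Definition~\ref{def-EDP} are accurate and do not diverge from the paper's argument.
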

\begin{proof}
  By Theorem~\ref{thm-indep2edp} we have $G\IPTrew{\R}
  \Ipgr{G}{M}$. By Theorem~\ref{thm-para} $M$ is sequential
  independent, hence by Proposition~\ref{prop-seqindep} we have $G
  \seqrew{\R} \Ipgr{G}{M}$. 
\end{proof}
Hence in this case parallel and sequential rewriting meet, and
parallel rewriting can be said to yield a \emph{correct} result
  w.r.t. sequential rewriting.

\section{Parallel Coherence}\label{sec-parcoh}

One drawback of the effective deletion property is that it cannot be
characterized as a property of pairs of elements of $M$, as the
following example shows.

\begin{example}
  We consider the following graph and rule
  \begin{align*}
    G &= \raisebox{-1.3ex}{\begin{tikzpicture}
    \node[draw,rounded corners] at (0,0)
    {$x\sepattr 0$};
  \end{tikzpicture}} \text{ where } \carrier{\algebrf{G}}=\set{0,1}\\
    r_3 &= (\,\raisebox{-1.3ex}{\begin{tikzpicture}
        \node[draw,rounded corners] at (0,0)
        {$x_3\sepattr 0$};
      \end{tikzpicture}},\, \raisebox{-1.3ex}{\begin{tikzpicture}
        \node[draw,rounded corners] at (0,0)
        {$x_3\sepattr 0$};
      \end{tikzpicture}},\, 
    \raisebox{-1.3ex}{\begin{tikzpicture}
        \node[draw,rounded corners] at (0,0)
        {$x_3\sepattr 0,1$};
      \end{tikzpicture}}\,)
  \end{align*}
  and also the rules $r_1$, $r_2$ of Example~\ref{ex-notseqindep}. For
  $i=1,2,3$ let $\mm_i$ be the unique matching of $r_i$ in $G$ such
  that $\attrf{\mm}_i$ is the identity function of $\set{0,1}$. Let
  $M=\set{\mm_1,\mm_2,\mm_3}$ and $N=\set{\mm_1,\mm_3}$. We obviously
  have $\Vd{M}=\Vd{N}=\Ad{M}=\Ad{N}=\ensvide$. We see that
  $\ld{\mm_1}(x) = \set{0}$ and $\ld{\mm_2}(x) = \ld{\mm_3}(x) =
  \ensvide$, so that $\ld{N}(x)=\ld{M}(x)=\set{0}$,
  \[\Ipgr{G}{N} = \raisebox{-0.7ex}{\begin{tikzpicture}
        \node[draw,rounded corners] at (0,0) {$x$};
      \end{tikzpicture}} \gcup \raisebox{-0.7ex}{\begin{tikzpicture}
        \node[draw,rounded corners] at (0,0) {$x$};
      \end{tikzpicture}} \gcup \raisebox{-1.3ex}{\begin{tikzpicture}
        \node[draw,rounded corners] at (0,0) {$x\sepattr 0,1$};
      \end{tikzpicture}} = \raisebox{-1.3ex}{\begin{tikzpicture}
        \node[draw,rounded corners] at (0,0) {$x\sepattr 0,1$};
      \end{tikzpicture}}\ \text{ and }\  \Ipgr{G}{M} = \raisebox{-0.7ex}{\begin{tikzpicture}
        \node[draw,rounded corners] at (0,0) {$x$};
      \end{tikzpicture}} \gcup \raisebox{-0.7ex}{\begin{tikzpicture}
        \node[draw,rounded corners] at (0,0) {$x$};
      \end{tikzpicture}} \gcup \raisebox{-1.3ex}{\begin{tikzpicture}
        \node[draw,rounded corners] at (0,0) {$x\sepattr 0$};
      \end{tikzpicture}} \gcup \raisebox{-1.3ex}{\begin{tikzpicture}
        \node[draw,rounded corners] at (0,0) {$x\sepattr 0,1$};
      \end{tikzpicture}} = \raisebox{-1.3ex}{\begin{tikzpicture}
        \node[draw,rounded corners] at (0,0) {$x\sepattr 0,1$};
      \end{tikzpicture}}\] We then see that
  $\llift{\mm_1}(x)=\ensvide$, $\llift{\mm_2}(x)=\set{0}$ and
  $\llift{\mm_3}(x)=\set{1}$, so that $\llift{N}(x)=\set{1}$ and
  $\llift{M}(x)=\set{0,1}$. Hence
  $\ld{M}(x)\setminus\llift{M}(x)=\ensvide$ and
  $\ld{N}(x)\setminus\llift{N}(x)=\set{0}$, and $M$ but not $N$ has
  the effective deletion property.
\end{example}

The reader may find strange that the conflict between $r_1$ and $r_3$
could be settled by some other rule, here $r_2$. This means that we
need the whole of $M$ to decide wether all conflicts are settled. For
this reason the effective deletion property may appear as too general.

Another possibility for defining parallel rewriting is to translate to
the present framework the notion of parallel coherence that has
been devised in order to define algebraic parallel graph
transformation (see \cite{BdlTE20b}). In that paper we used production
rules of the form $L\leftarrow K\leftarrow I \rightarrow R$ that do
not require a morphism from $K$ to $R$. Direct derivations are commuting
diagrams
\begin{center}
  \begin{tikzpicture}[xscale=1.8, yscale=1.4]
    \node (L) at (0,1) {$L$}; \node (K) at (1,1) {$K$};  \node (I) at
    (2,1) {$I$};  \node (R) at (3,1) {$R$}; \node (G) at (0.5,0) {$G$};
    \node (D) at (1.5,0) {$D$}; \node (H) at (2.5,0) {$H$};
    \path[->] (K) edge (L);
    \path[->] (I) edge (K);
    \path[->] (I) edge (R);
    \path[->] (L) edge (G);
    \path[->] (K) edge (D);
    \path[->] (D) edge (G);
    \path[->] (I) edge (D);
    \path[->] (D) edge (H);
    \path[->] (R) edge (H);
  \end{tikzpicture}
\end{center}
where the squares are pushouts. Note that a standard Double-Pushout
can be obtained with $K=I$.  \emph{Parallel coherence}, as a property
of two direct transformations of the same object $G$, is defined as
the existence of two commuting morphisms $j_1$ and $j_2$ as shown
below.

\begin{center}
  \begin{tikzpicture}[xscale=1.65, yscale=2]
    \node (L) at (0.5,1) {$L_2$}; \node (K) at (1.5,1) {$K_2$};  \node (I) at
    (2.5,1) {$I_2$};  \node (R) at (3.5,1) {$R_2$}; \node (G) at (0,0) {$G$};
    \node (D) at (2,0) {$D_2$}; \node (H) at (3,0) {$H_2$};
    \path[->] (K) edge (L);
    \path[->] (I) edge (K);
    \path[->] (I) edge (R);
    \path[->] (L) edge node[fill=white, font=\footnotesize, near start] {$\nm$} (G);
    \path[->] (K) edge (D);
    \path[->] (D) edge (G);
    \path[->] (I) edge (D);
    \path[->] (D) edge (H);
    \path[->] (R) edge (H);
    \node (L1) at (-0.5,1) {$L_1$}; \node (K1) at (-1.5,1) {$K_1$};  \node (I1) at
    (-2.5,1) {$I_1$};  \node (R1) at (-3.5,1) {$R_1$};
    \node (D1) at (-2,0) {$D_1$}; \node (H1) at (-3,0) {$H_1$};
    \path[->] (K1) edge (L1);
    \path[->] (I1) edge (K1);
    \path[->] (I1) edge (R1);
    \path[->] (L1) edge node[fill=white, font=\footnotesize, near start] {$\mm$} (G);
    \path[->] (K1) edge (D1);
    \path[->] (D1) edge (G);
    \path[->] (I1) edge (D1);
    \path[->] (D1) edge (H1);
    \path[->] (R1) edge (H1);
    \path[-] (I1) edge[draw=white, line width=3pt]  (D);
    \path[-] (I) edge[draw=white, line width=3pt]  (D1);
    \path[->,dashed] (I1) edge node[fill=white, font=\footnotesize,
    near start] {$j_1$} (D);
    \path[->,dashed] (I) edge node[fill=white, font=\footnotesize,
    near start] {$j_2$} (D1);
  \end{tikzpicture}    
\end{center}

This notion clearly generalizes algebraic parallel independence and is
therefore a good candidate. In the present framework the object $I_2$
is replaced by the graph $R_2\gcap K_2$, hence the commuting property
of $j_2$ amounts to
$\nm(R_2\gcap K_2)\subalg \delgraph{G}{\Vd{\mm}}{\Ad{\mm}}{\ld{\mm}}$,
that can be expressed as
$\mm(\Lg{\mm}) \gcap \nm(\Rg{\nm}\gcap \Kg{\nm}) \subalg
\mm(\Kg{\mm})$.  This simply means that any graph item that is matched
by some $R\gcap K$ cannot be removed by any rule.

\begin{definition}[parallel coherence]\label{def-parcoh}
  For any graph $G$ and set $M \subseteq\Matches{\R}{G}$, we say that
  $M$ is \emph{parallel coherent} if
  \[\nm(\Rg{\nm}\gcap \Kg{\nm})\gcap \mm(\Lg{\mm}) \subalg
  \mm(\Kg{\mm}) \text{ for all } \mm,\nm\in M.\]
\end{definition}

We easily show that this notion is more general than regularity.

\begin{lemma}\label{lm-reg2parcoh}
  For any graph $G$ and set $M \subseteq\Matches{\R}{G}$, if $M$ is
  regular then $M$ is parallel coherent.
\end{lemma}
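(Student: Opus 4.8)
The plan is to reduce parallel coherence to the pairwise characterization of regularity supplied by Lemma~\ref{lm-regular}. Fix $\mm,\nm\in M$; since $M$ is regular, that lemma gives $\liftm{\nm}(\Rg{\nm})\gcap\mm(\Lg{\mm})\subalg\mm(\Kg{\mm})$. Hence it suffices to establish the single inclusion $\nm(\Rg{\nm}\gcap\Kg{\nm})\subalg\liftm{\nm}(\Rg{\nm})$: intersecting both sides with $\mm(\Lg{\mm})$ and using monotonicity of $\gcap$ (which applies because the graphs in play are pairwise joinable --- the first two are subgraphs of $G$ and $\liftm{\nm}(\Rg{\nm})=\RImg{G}{\nm}$ is joinable with $G$) then yields $\nm(\Rg{\nm}\gcap\Kg{\nm})\gcap\mm(\Lg{\mm})\subalg\mm(\Kg{\mm})$, which is exactly the condition of Definition~\ref{def-parcoh}.

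To obtain that inclusion I would first invoke Remark~\ref{rem-rules} for the identity $\Rg{\nm}\gcap\Kg{\nm}=\Lg{\nm}\gcap\Rg{\nm}$, so that $\Rg{\nm}\gcap\Kg{\nm}$ is a subgraph of $\Kg{\nm}$ (hence of $\Lg{\nm}$) and of $\Rg{\nm}$; in particular both $\nm$ (with domain $\carrier{\Lg{\nm}}$) and $\liftm{\nm}$ (with domain $\carrier{\Rg{\nm}}$) are defined on $\carrier{\Rg{\nm}\gcap\Kg{\nm}}$. By Definition~\ref{def-RImg}, $\liftm{\nm}(x)=\nm(x)$ for every vertex or arrow $x$ lying in $\nodes{\Kg{\nm}}\cup\arrows{\Kg{\nm}}$ and $\liftm{\attrf{\nm}}=\attrf{\nm}$, so $\nm$ and $\liftm{\nm}$ agree on all of $\carrier{\Rg{\nm}\gcap\Kg{\nm}}$. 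Since the image of a subgraph depends only on the restriction of the morphism to that subgraph's carrier, this gives $\nm(\Rg{\nm}\gcap\Kg{\nm})=\liftm{\nm}(\Rg{\nm}\gcap\Kg{\nm})$; and since $\liftm{\nm}$ is a matching and $\Rg{\nm}\gcap\Kg{\nm}\subalg\Rg{\nm}$, monotonicity of the image under a matching (read off the explicit formula for $\mm(F)$) gives $\liftm{\nm}(\Rg{\nm}\gcap\Kg{\nm})\subalg\liftm{\nm}(\Rg{\nm})$, which closes the chain.

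The only step needing genuine care is the carrier bookkeeping: one must check that $\carrier{\Rg{\nm}\gcap\Kg{\nm}}$ is precisely the set on which $\nm$ and $\liftm{\nm}$ may be compared, and that they agree there (not merely on the underlying-graph part), which relies on the joinability of $\Lg{\nm}$ and $\Rg{\nm}$ --- so that no vertex of one coincides with an arrow of the other --- and on the clause-by-clause definition of $\liftm{\nm}$. The remaining facts (monotonicity of $\gcap$, monotonicity of images under matchings, $\mm(\Lg{\mm})\subalg G$, pairwise joinability of the graphs involved) are all routine consequences of Section~\ref{sec-graphs} and Definition~\ref{def-RImg}.
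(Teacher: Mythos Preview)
Your argument is correct and follows essentially the same route as the paper: invoke Lemma~\ref{lm-regular}, observe that $\nm(\Rg{\nm}\gcap\Kg{\nm})=\liftm{\nm}(\Rg{\nm}\gcap\Kg{\nm})\subalg\liftm{\nm}(\Rg{\nm})$ because $\nm$ and $\liftm{\nm}$ agree on $\Kg{\nm}$, and then use monotonicity of $\gcap$. The paper's proof is simply a terser version of yours, asserting the key equality $\nm(\Rg{\nm}\gcap\Kg{\nm})=\liftm{\nm}(\Rg{\nm}\gcap\Kg{\nm})$ without spelling out the carrier bookkeeping you carefully justify.
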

\begin{proof}
  By Lemma~\ref{lm-regular} we have
  $\liftm{\nm}(\Rg{\nm}) \gcap \mm(\Lg{\mm})\subalg \mm(\Kg{\mm})$ for
  all $\mm,\nm\in M$. Since $\Rg{\nm}\gcap \Kg{\nm} \subalg \Rg{\nm}$
  then $\nm(\Rg{\nm}\gcap \Kg{\nm}) = \liftm{\nm}(\Rg{\nm}\gcap
  \Kg{\nm}) \subalg \liftm{\nm}(\Rg{\nm})$, hence $\nm(\Rg{\nm}\gcap
  \Kg{\nm})\gcap \mm(\Lg{\mm}) \subalg \liftm{\nm}(\Rg{\nm}) \gcap
  \mm(\Lg{\mm}) \subalg \mm(\Kg{\mm})$, hence $M$ is parallel coherent.
\end{proof}

It is easy to see that the converse does not hold (use for instance
Example~\ref{ex-notseqindep}). We now show that parallel coherence is
a restriction of the (possibly too general) effective deletion
property.

\begin{theorem}\label{thm-parcoh2edp}
  For any graph $G$ and set $M\subseteq\Matches{\R}{G}$, if $M$ is
  parallel coherent then $M$ has the effective
  deletion property.
\end{theorem}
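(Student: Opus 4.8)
The plan is to unfold the effective deletion property through its definition and reduce it to a statement about the three sets $\Vd{M}$, $\Ad{M}$ and $\ld{M}\setminus\llift{M}$, handling the vertex and arrow components exactly as in the proof of Lemma~\ref{lm-regular} and then doing the real work on the attribute component. As in that proof, write $H = \Gcup_{\nm\in M}\RImg{G}{\nm}$, so that $\Ipgr{G}{M}$ is disjoint from $\Vd{M},\Ad{M},\ld{M}\setminus\llift{M}$ iff $H$ is. For vertices and arrows, parallel coherence gives $\nm(\Rg{\nm}\gcap\Kg{\nm})\gcap\mm(\Lg{\mm})\subalg\mm(\Kg{\mm})$, hence in particular $\liftm{\nm}(\nodesRg{\nm}\cap\nodesKg{\nm})\cap\mm(\nodesLg{\mm})\subseteq\mm(\nodesKg{\mm})$; since the only new vertices of $H$ coming from $\nm$ outside $\mm(\nodesKg{\mm})$ are either in $\liftm{\nm}(\nodesRg{\nm}\cap\nodesKg{\nm})$ or of the form $\tuple{x,\nm}$ (hence disjoint from $\mm(\nodesLg{\mm})\subseteq\nodes{G}$ and a fortiori from $\Vd{M}\subseteq\nodes{G}$), we get $\nodes{H}\cap\Vd{M}=\ensvide$, and symmetrically $\arrows{H}\cap\Ad{M}=\ensvide$.

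The core of the argument is the attribute component: I must show $\labelf{H}(x)\cap(\ld{M}(x)\setminus\llift{M}(x))=\ensvide$ for every vertex or arrow $x$ of $\Ipgr{G}{M}$. Expanding as in Lemma~\ref{lm-regular}, $\labelf{H}(x)=\bigcup_{\nm\in M}\labelf{\nm}\circ\labelfRg{\nm}\circ\invf{\nm}(x)$, and I split each $\labelfRg{\nm}$ as $(\labelfRg{\nm}\cap\labelfKg{\nm})\cup(\labelfRg{\nm}\setminus\labelfKg{\nm})$. The second piece contributes exactly $\llift{\nm}(x)\subseteq\llift{M}(x)$, which is removed when we intersect with $\ld{M}(x)\setminus\llift{M}(x)$. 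The first piece, $\labelf{\nm}(\labelfRg{\nm}(y)\cap\labelfKg{\nm}(y))$ where $y=\invf{\nm}(x)$, equals $\nm(\Rg{\nm}\gcap\Kg{\nm})$'s attribute at $x$; intersecting it with $\ld{M}(x)=\bigcup_{\mm\in M}\labelf{\mm}\circ(\labelfLg{\mm}\setminus\labelfKg{\mm})\circ\invf{\mm}(x)$ and using consistency of each $\mm$ (to write $\labelf{\mm}(\labelfLg{\mm}\setminus\labelfKg{\mm})=\labelf{\mm}(\labelfLg{\mm})\setminus\labelf{\mm}(\labelfKg{\mm})$, as in Lemma~\ref{lm-regular}), parallel coherence $\nm(\Rg{\nm}\gcap\Kg{\nm})\gcap\mm(\Lg{\mm})\subalg\mm(\Kg{\mm})$ forces the intersection into $\labelf{\mm}\circ\labelfKg{\mm}\circ\invf{\mm}(x)$, hence out of $\labelf{\mm}(\labelfLg{\mm})\setminus\labelf{\mm}(\labelfKg{\mm})$, so it meets $\ld{M}(x)$ trivially. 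Combining the two pieces over all $\nm\in M$ and all $\mm\in M$ gives $\labelf{H}(x)\cap(\ld{M}(x)\setminus\llift{M}(x))=\ensvide$.

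Putting the three components together yields that $H$, and therefore $\Ipgr{G}{M}$, is disjoint from $\Vd{M},\Ad{M},\ld{M}\setminus\llift{M}$, which is precisely the effective deletion property. The step I expect to require the most care is the attribute bookkeeping: the subtlety is that parallel coherence only controls the $\Rg{\nm}\gcap\Kg{\nm}$ part of each right-hand side, so one genuinely needs the split $\labelfRg{\nm}=(\labelfRg{\nm}\cap\labelfKg{\nm})\cup(\labelfRg{\nm}\setminus\labelfKg{\nm})$ and the observation that the complementary part is exactly $\llift{M}$ — this is where the definition of $\llift{M}$ in Definition~\ref{def-EDP} is tailored to match. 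A minor point to state cleanly is that $\invf{\nm}$ and $\invf{\mm}$ are only partial, so all these expressions are understood to be $\ensvide$ when $x$ is not in the relevant image, exactly as in the extension convention for attributions fixed in Section~\ref{sec-graphs}.
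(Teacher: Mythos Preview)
Your proposal is correct and follows essentially the same approach as the paper's proof: reduce to the union of the $\RImg{G}{\nm}$'s, handle vertices and arrows via parallel coherence (using that the new items $\tuple{x,\nm}$ lie outside $\nodes{G}\cup\arrows{G}$), then split the attribute contribution of each $\Rg{\nm}$ into a $\Kg{\nm}$-part (killed by parallel coherence together with consistency of $\mm$) and a complement (absorbed into $\llift{M}$). Your split of $\labelfRg{\nm}$ as $(\labelfRg{\nm}\cap\labelfKg{\nm})\cup(\labelfRg{\nm}\setminus\labelfKg{\nm})$ \emph{before} applying $\labelf{\nm}$ is in fact slightly more direct than the paper's, which splits \emph{after} applying $\labelf{\nm}$ and then needs the auxiliary inclusion $f(A)\setminus f(A\cap B)\subseteq f(A\setminus B)$ to bound the complementary piece by $\llift{M}$.
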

\begin{proof}
  Let $H=\Ipgr{G}{M}$ then as in the proof of Lemma~\ref{lm-regular} we
  have
  \[\nodes{H}\cap \Vd{M} = \bigcup_{\mm,\nm\in M}\liftm{\nm}(\nodesRg{\nm})
    \cap \mm(\nodesLg{\mm})\setminus \mm(\nodesKg{\mm}).\] 
  But $\mm(\nodesLg{\mm})\subseteq \nodes{G}$ and by
  Definition~\ref{def-RImg} we have
  \begin{align*}
    \nodes{G}\cap \liftm{\nm}(\nodesRg{\nm}) 
    &= \nodes{G}\cap \RImg{\nodes{G}}{\nm} \\
    &= \nodes{G} \cap \big({\nm}(\nodesRg{\nm}\cap \nodesKg{\nm})\cup
      ((\nodesRg{\nm} \setminus \nodesKg{\nm}) \times \set{\nm})\\
    &= {\nm}(\nodesRg{\nm}\cap \nodesKg{\nm}),
  \end{align*}
  hence
    \[\nodes{H}\cap \Vd{M} 
    =  \bigcup_{\mm,\nm\in M}
          \nm(\nodesRg{\nm}\cap\nodesKg{\nm})\cap
          \mm(\nodesLg{\mm})\setminus \mm(\nodesKg{\mm})
     =  \ensvide\]
  since by parallel coherence
  $\nm(\nodesRg{\nm}\cap\nodesKg{\nm})\cap \mm(\nodesLg{\mm})\subseteq
  \mm(\nodesKg{\mm})$ for all $\mm,\nm\in M$. Similarly
  $\arrows{H}\cap \Ad{M} = \ensvide$.

  For all $x\in \nodes{H}\cup \arrows{H}$,
  if $x\not\in \nodes{G}\cup\arrows{G}$ then $\ld{M}(x)=\ensvide$ and
  obviously $\labelf{H}(x)\cap\ld{M}(x)\setminus
  \llift{M}(x)=\ensvide$. Otherwise $x\in \nodes{G}\cup\arrows{G}$ hence
  $\invf{\liftm{\nm}}(x)=\invf{\nm}(x)$ so that
  \[\labelf{H}(x) = \big(\labelf{G}(x)\setminus
    \ld{M}(x)\big)\cup\bigcup_{\nm\in
      M}\labelf{\nm}\circ \labelfRg{\nm}\circ \invf{\nm}(x).\] 
  Using the identity $A=(A\setminus B)\cup (A\cap B)$ for all sets $A$
  and $B$ we have
  \begin{align*}
    \labelf{\nm}\circ\labelfRg{\nm}\circ \invf{\nm}(x)
    &= \big(\labelf{\nm}\circ\labelfRg{\nm}\circ \invf{\nm}(x) \setminus \labelf{\nm}\circ
      (\labelfRg{\nm}\cap \labelfKg{\nm})\circ \invf{\nm}(x)\big)\\
    & \quad \cup
      \big(\labelf{\nm}\circ\labelfRg{\nm}\circ \invf{\nm}(x) \cap \labelf{\nm}\circ
      (\labelfRg{\nm}\cap \labelfKg{\nm})\circ \invf{\nm}(x)\big) 
  \end{align*}
  for all $\nm\in M$. By parallel coherence we have
  $\labelf{\nm}\circ (\labelfRg{\nm}\cap \labelfKg{\nm})\circ
  \invf{\nm}(x)\cap \labelf{\mm}\circ
  \labelfLg{\mm}\circ \invf{\mm}(x) \subseteq \labelf{\mm}\circ
  \labelfKg{\mm}\circ \invf{\mm}(x)$  for all $\mm,\nm\in M$, and
  since $\mm$ is consistent we get
  \begin{align*}
      \labelf{\nm}\circ (\labelfRg{\nm}\cap \labelfKg{\nm})\circ \invf{\nm}(x)\cap \ld{M}(x)
    & = \bigcup_{\mm\in M}\labelf{\nm}\circ (\labelfRg{\nm}\cap \labelfKg{\nm})\circ
    \invf{\nm}(x)\cap \labelf{\mm}\circ (\labelfLg{\mm}\setminus \labelfKg{\mm})\circ \invf{\mm}(x)\\
    & = \bigcup_{\mm\in M}\labelf{\nm}\circ (\labelfRg{\nm}\cap \labelfKg{\nm})\circ
    \invf{\nm}(x)\cap \labelf{\mm}\circ \labelfLg{\mm}\circ
      \invf{\mm}(x)  \setminus \labelf{\mm}\circ\labelfKg{\mm}\circ \invf{\mm}(x) \\ 
    & = \ensvide,
  \end{align*}
  hence
  \begin{align*}
    \labelf{H}(x)\cap \ld{M}(x)
    & =  \bigcup_{\nm\in M}\labelf{\nm}\circ \labelfRg{\nm}\circ
    \invf{\nm}(x)\cap \ld{M}(x)\\
    &= \bigcup_{\nm\in M}\bigl(\labelf{\nm}\circ \labelfRg{\nm}\circ
    \invf{\nm}(x)\setminus \labelf{\nm}\circ (\labelfRg{\nm}\cap \labelfKg{\nm})\circ
    \invf{\nm}(x)\bigr)\cap \ld{M}(x).
  \end{align*}
  Finally, by using the obvious fact that $f(A)\setminus f(A\cap
  B)\subseteq f(A\setminus B)$ for any function $f$, we get
  \[\labelf{H}(x)\cap \ld{M}(x) \subseteq \bigcup_{\nm\in
      M} \labelf{\nm}\circ (\labelfRg{\nm}\setminus \labelfKg{\nm})
      \circ\invf{\nm}(x)\cap \ld{M}(x) \subseteq \llift{M}(x)\]
    hence $\labelf{H}(x)\cap \ld{M}(x)\setminus
  \llift{M}(x)=\ensvide$. This proves that $H$ is disjoint from
  $\Vd{M}$, $\Ad{M}$, $\ld{M}\setminus \llift{M}$ and therefore
  that $M$ has the effective deletion property.
\end{proof}

Yet parallel coherence is not sufficient in the light of parallel
independence, as we now show.

\begin{prop}
  Parallel coherence does not generalize parallel independence.
\end{prop}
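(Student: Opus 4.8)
The plan is to reuse the set $M=\set{\mm_1,\mm_2}$ constructed in Example~\ref{ex-EDP}, which is shown there to be parallel independent (though not regular). It then remains only to verify that this same $M$ fails to be parallel coherent, and the proposition follows at once.

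First I would specialise Definition~\ref{def-parcoh} to the ordered pair $(\nm,\mm)=(\mm_1,\mm_2)$: the instance to be checked is $\mm_1(R_1\gcap K_1)\gcap\mm_2(L_2)\subalg\mm_2(K_2)$. Then I would evaluate both sides using the data of Example~\ref{ex-EDP}. The graph $R_1\gcap K_1$ has the single vertex $x_1$, whose attribute is $\labelf{R}_1(x_1)\cap\labelf{K}_1(x_1)=\set{u,f(u)}\cap\set{u}=\set{u}$; hence $R_1\gcap K_1=K_1=L_1$, and so $\mm_1(R_1\gcap K_1)=\mm_1(L_1)=G$. Since also $\mm_2(L_2)=G$, the left-hand side is $G\gcap G=G$. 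The right-hand side $\mm_2(K_2)$ is the graph with the single vertex $x$ carrying no attribute, and $G$ is not a subgraph of $\mm_2(K_2)$ because $\labelf{G}(x)=\set{0}\not\subseteq\ensvide$. Thus the required inclusion fails, so $M$ is not parallel coherent.

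Putting this together with the parallel independence of $M$ from Example~\ref{ex-EDP} gives the claim. I expect no genuine obstacle here: the whole argument is a short computation inside an example already available in the paper. The point worth stressing is the mechanism behind it: the attribute $0$ belongs to $R_1\gcap K_1$ — it is both retained \emph{and} re-produced by $r_1$ — while it is deleted by $r_2$, which is precisely what violates parallel coherence; yet $r_2$ rebuilds $0$ via the term $f(v)$, so the weaker requirement of Definition~\ref{def-parindep} is still satisfied. This is the same phenomenon that makes the effective deletion property strictly more general than regularity (compare Example~\ref{ex-reg-parindep-to-edp}).
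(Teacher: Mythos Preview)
Your proposal is correct and follows essentially the same approach as the paper: both reuse the parallel independent set $M=\set{\mm_1,\mm_2}$ from Example~\ref{ex-EDP} and check that the instance $\mm_1(R_1\gcap K_1)\gcap\mm_2(L_2)\subalg\mm_2(K_2)$ of Definition~\ref{def-parcoh} fails, via the very same computation yielding $G\not\subalg\mm_2(K_2)$. Your added remark that $R_1\gcap K_1=K_1=L_1$ and the closing explanatory paragraph are harmless extras.
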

\begin{proof}
  In Example~\ref{ex-EDP} is exhibited a set $M=\set{\mm_1,\mm_2}$
  that is shown to be parallel independent. But we see that
  \[\mm_1(R_1\gcap K_1)\gcap\mm_2(L_2) = \mm_1(\,\raisebox{-1.3ex}{\begin{tikzpicture}
    \node[draw,rounded corners] at (0,0)
    {$x_1\sepattr u$};
  \end{tikzpicture}}\,) \gcap \mm_2(\,\raisebox{-1.3ex}{\begin{tikzpicture}
    \node[draw,rounded corners] at (0,0)
    {$x_2\sepattr v$};
  \end{tikzpicture}}\,) = \raisebox{-1.3ex}{\begin{tikzpicture}
    \node[draw,rounded corners] at (0,0)
    {$x\sepattr 0$};
  \end{tikzpicture}} = G\] is not a subgraph of
  $\mm_2(K_2) = \raisebox{-0.7ex}{\begin{tikzpicture}
    \node[draw,rounded corners] at (0,0) {$x$};
  \end{tikzpicture}}$, hence $M$ is not parallel coherent.
\end{proof}

Parallel coherence is therefore too restricted to support a definition
of parallel rewriting in the present framework. The problem here as
above is that deleted attributes can be recovered by the right-hand
side of rules, and that this possibility is not accounted for in the
algebraic definitions, since these do not distinguish between graph
items and attributes.

To summarize, we have established that the following implications
hold, and no other:
\[\text{regularity }\imply \text{ parallel coherence } \imply
  \text{ effective deletion property } \Leftarrow \text{ parallel independence.}\]

We see this as an endorsement of parallel rewriting based on the
effective deletion property (Definition~\ref{def-EDP}), even if it is
the only property that cannot be characterized simply on pairs of
matchings.  This suggests that the effective deletion property would
be worth transposing to an algebraic framework. But there is no
straightforward way of doing this, as can now be shown.

\begin{corollary}
  For any set of unlabeled rules $\R$, any unlabeled graph $G$ and
  any subset $M\subseteq \Matches{\R}{G}$, 
   \[M\text{ is regular\ \ iff\ \ }M\text{ is parallel
   coherent\ \ iff\ \ }M\text{ has the effective deletion property.}\]
\end{corollary}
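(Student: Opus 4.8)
The plan is to close the loop on the chain of implications summarised just above the statement. By Lemma~\ref{lm-reg2parcoh} and Theorem~\ref{thm-parcoh2edp} we already have, for \emph{arbitrary} rules and graphs, that regularity implies parallel coherence and parallel coherence implies the effective deletion property. So it suffices to establish the missing implication: when $\R$ consists of unlabeled rules and $G$ is unlabeled, the effective deletion property implies regularity. Together with the two implications just recalled, this yields the three-way equivalence.

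First I would note that regularity (Definition~\ref{def-regular}) and the effective deletion property (Definition~\ref{def-EDP}) differ only in that the latter requires $\Ipgr{G}{M}$ to be disjoint from $\Vd{M},\Ad{M},\ld{M}\setminus\llift{M}$ rather than from $\Vd{M},\Ad{M},\ld{M}$. Hence it is enough to show that on unlabeled rules $\ld{M}\setminus\llift{M}=\ld{M}$; in fact I would show the stronger (and cleaner) fact that $\ld{M}$ and $\llift{M}$ are both the attribution mapping every vertex and arrow to $\ensvide$. Then the two disjointness requirements become literally the same condition on $\Ipgr{G}{M}$, namely $\nodes{\Ipgr{G}{M}}\cap\Vd{M}=\ensvide$ and $\arrows{\Ipgr{G}{M}}\cap\Ad{M}=\ensvide$ (the attribute clause being vacuous), so regularity and the effective deletion property coincide.

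To prove that claim I would unfold $\ld{M}=\bigcup_{\mm\in M}\attrf{\mm}\circ(\attrfLg{\mm}\setminus\attrfKg{\mm})\circ\invf{\mm}$ and $\llift{M}=\bigcup_{\mm\in M}\attrf{\mm}\circ(\attrfRg{\mm}\setminus\attrfKg{\mm})\circ\invf{\mm}$. Since each rule $\rulem{\mm}=\tuple{\Lg{\mm},\Kg{\mm},\Rg{\mm}}$ is unlabeled, we have $\attrfLg{\mm}(y)=\attrfKg{\mm}(y)=\attrfRg{\mm}(y)=\ensvide$ for all $y\in\nodesLg{\mm}\cup\arrowsLg{\mm}$ and the corresponding entries of $K$ and $R$; hence $\attrfLg{\mm}\setminus\attrfKg{\mm}$ and $\attrfRg{\mm}\setminus\attrfKg{\mm}$ are constant-$\ensvide$ attributions, and composing with $\attrf{\mm}$ (a $\Sigma$-homomorphism, so it sends $\ensvide$ to $\ensvide$) and reindexing along $\invf{\mm}$ keeps them constant-$\ensvide$; taking the union over $\mm\in M$ therefore leaves both $\ld{M}$ and $\llift{M}$ equal to the constant-$\ensvide$ attribution. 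There is no real obstacle here; the one point to state carefully is that it is the rule-component attributions $\attrfLg{\mm},\attrfKg{\mm},\attrfRg{\mm}$ that vanish, not the homomorphism $\attrf{\mm}$ — a matching of an unlabeled rule may still carry a nontrivial $\Sigma$-homomorphism on the term algebra, but that homomorphism is only ever applied to the empty sets produced by the differences above, so it contributes nothing to $\ld{M}$ or $\llift{M}$.
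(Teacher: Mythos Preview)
Your argument is correct and arguably cleaner than the paper's. You close the cycle by observing that, when the rules are unlabeled, the attributions $\ld{M}$ and $\llift{M}$ are both identically $\ensvide$, so the disjointness conditions defining regularity (Definition~\ref{def-regular}) and the effective deletion property (Definition~\ref{def-EDP}) become literally the same requirement on $\Ipgr{G}{M}$. The paper instead pushes the effective deletion property through the union $\Gcup_{\nm\in M}\liftm{\nm}(\Rg{\nm})$, extracts the vertex and arrow disjointness for each pair $\mm,\nm$, observes that the images $\liftm{\nm}(\Rg{\nm})$, $\mm(\Lg{\mm})$, $\mm(\Kg{\mm})$ are unlabeled, and concludes via the pairwise characterization of regularity in Lemma~\ref{lm-regular}. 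Your route avoids this detour entirely and works directly at the level of Definitions~\ref{def-regular} and~\ref{def-EDP}; the paper's route has the minor advantage of exhibiting the pairwise condition explicitly. One cosmetic remark: your parenthetical ``a $\Sigma$-homomorphism, so it sends $\ensvide$ to $\ensvide$'' is true but overstated---the image of $\ensvide$ under \emph{any} function is $\ensvide$; the homomorphism structure is irrelevant here. Also note that neither your proof nor the paper's actually needs $G$ to be unlabeled: only the rules' attributions enter into $\ld{M}$, $\llift{M}$, and the images $\mm(\Lg{\mm})$ etc.
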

\begin{proof}
  Assume that $M$ has the effective deletion property, then
  $\Ipgr{G}{M}$ is disjoint from
  $\Vd{M},\Ad{M},\ld{M}\setminus\llift{M}$ hence so is
  $\Gcup_{\nm\in M}\liftm{\nm}(\Rg{\nm})$. For all $\mm\in M$ we have
  $\Vd{\mm}\subseteq \Vd{M}$ and $\Ad{\mm}\subseteq \Ad{M}$, hence
  $\Gcup_{\nm\in M}\liftm{\nm}(\Rg{\nm})$ is disjoint from $\Vd{\mm}$,
  $\Ad{\mm}$, $\ensvide$ and therefore so is $\liftm{\nm}(\Rg{\nm})$
  for every $\nm\in M$. Thus
  $\liftm{\nm}(\nodesRg{\nm})\cap \mm(\nodesLg{\mm})\setminus
  \mm(\nodesKg{\mm}) = \ensvide$ and
  $\liftm{\nm}(\arrowsRg{\nm})\cap \mm(\arrowsLg{\mm})\setminus
  \mm(\arrowsKg{\mm}) = \ensvide$, which is equivalent to
  $\liftm{\nm}(\nodesRg{\nm})\cap \mm(\nodesLg{\mm})\subseteq
  \mm(\nodesKg{\mm})$ and
  $\liftm{\nm}(\arrowsRg{\nm})\cap \mm(\arrowsLg{\mm})\subseteq
  \mm(\arrowsKg{\mm})$. Since these graphs are unlabeled, this
  entails that
  $\liftm{\nm}(\Rg{\nm})\gcap \mm(\Lg{\mm})\subalg \mm(\Kg{\mm})$ for
  all $\mm,\nm\in M$, hence that $M$ is regular by
  Lemma~\ref{lm-regular}. The equivalences follow by
  Lemma~\ref{lm-reg2parcoh} and Theorem~\ref{thm-parcoh2edp}.
\end{proof}
Hence an algebraic approach to parallel graph transformation that
would apply to the category of (unlabeled) graphs could not
distinguish these notions. In this sense parallel coherence is already
the right algebraic translation of the effective deletion property
(and of regularity), even if it is too weak to account for the special
treatment of attributes in the present non algebraic framework.

\section{Related Work and Conclusion}\label{sec-conclusion}

Many notions of attributed graphs exist in the literature. For
instance, in \cite{Plump04,DuvalEPR14} graph items can hold at most
one attribute. This means that concurrent rules could possibly conflict
because of their right-hand sides, if two rules required to attribute
distinct values to the same graph item. Our choice of attaching sets
of attributes to vertices and arrows means that new attributes are
freely included in those sets, and thus avoids conflicting right-hand
sides. Indeed, we see from Definition~\ref{def-EDP} that a
conflict must involve an element of $\Vd{M}$, $\Ad{M}$ or
$\ld{M}$. Hence the right-hand sides of rules never \emph{create}
conflicts, though they may \emph{settle} the conflicts created in the
left-hand sides and are therefore relevant to
parallel independence.

Other notions of attributed graphs that allow unbounded attributes are
possible, for instance the E-graphs from \cite{EhrigEPT06}. But the
fact that in E-graphs a single value can be referenced several times
as attribute of a vertex or arrow means that the number of matchings
of rules may uselessly inflate.

Another approach to parallelism is to accept overlapping, non
independent matchings and ask the user to decide what to do in
particular situations \cite{KniemeyerBHK07}. The present
approach shows that the user can be spared this work not just on
parallel independent matchings, but on the larger class of sets that
satisfy the effective deletion property (or parallel coherence in an
algebraic framework). 

It is also possible to restrict by design all
overlaps to vertices, as is the case in Hyperedge Replacement Systems
\cite{DrewesKH97}, and still be able to specify powerful parallel
transformations \cite{LaneseM05}, though in a non deterministic
way. Note that these are asynchronous models of parallelism, where
determinism amounts to confluence. This property has been widely
studied in term rewriting; it becomes more subtle when acyclic term
graphs are considered \cite{Plump99}, and more elusive when cycles are
allowed \cite{AriolaK96,AriolaB97}. Our model of parallelism is a
synchronous one where deterministic transformations can be designed
without reference to confluence \cite{BdlTE20}, as in cellular automata.

The use of parallel transformations to define sequential independence
in an algebraic approach to graph rewriting (as in
Definition~\ref{def-seqindep}) could be worth investigating.

\bibliographystyle{eptcs}
%\bibliography{../biblio}

\end{document}